\newcommand{\bra}[1]{\langle #1|}
\newcommand{\ket}[1]{|#1\rangle}
\newcommand{\cent}[0]{\mbox{\textcent}}
\newcommand{\dollar}[0]{\$}
\title{Probabilistic and quantum finite automata with postselection\thanks{This 
work was partially supported by the Scientific and Technological
Research Council of Turkey (T\"{U}B\.ITAK) with grant 108E142.}{\small'}\thanks{A preliminary version of this paper appeared in the \textit{Proceedings of Randomized and Quantum Computation (satellite workshop of MFCS and CSL 2010)}, pages 14--24, 2010.}}
\author{Abuzer Yakary{\i}lmaz\ \and A. C. Cem Say }
\institute{Bo\u{g}azi\c{c}i University, Department of Computer Engineering,\\ Bebek 34342 \.{I}stanbul, Turkey \\
\email{{abuzer,say}@boun.edu.tr}
 \\~~\\
\today
}
\begin{document}

\newlength{\twidth}
\maketitle
\pagenumbering{arabic}

\begin{abstract} \label{abstract:Abstract}

We prove that endowing a real-time probabilistic or quantum computer with the ability of postselection increases its computational power. For this purpose, we provide a new model of 
finite automata with postselection, and compare it with the model of L\={a}ce et al. We examine the related language classes, and also establish separations between the classical and quantum versions, and between the zero-error vs. bounded-error modes of recognition in this model.

\end{abstract}

\section{Introduction} \label{section:Introduction}

The notion of postselection as a mode of computation was introduced by
Aaronson \cite{Aa05}. Postselection is the (unrealistic) capability
of discarding all branches of a computation in which a specific event
does not occur, and focusing on the surviving branches for the final
decision about the membership of the input string in the recognized
language. Aaronson examined $\mathsf{PostBQP}$, the class of languages recognized with bounded error
by polynomial-time quantum computers with postselection, and showed it
to be identical to the well-known classical complexity class $\mathsf{PP}$. It is, however, still an open question whether postselection adds anything to the power of polynomial-time computation, since we do not even know whether $\mathsf{P}$, the class of languages recognized by classical computers with zero error in polynomial time, equals $\mathsf{PP}$ or not. In this paper, we prove that postselection \textit{is} useful for real-time computers with a constant space bound, that is, finite automata.

Groundbreaking work on the effect of postselection on quantum finite
automata (QFAs) was performed by L\={a}ce, Scegulnaja-Dubrovska, and
Freivalds \cite{LSF09}, who defined a model that is somewhat different (and, as we show here, strictly more powerful,)
than Aaronson's basic concept. In this paper, we examine the power of
postselection on both probabilistic and quantum finite automata. Our
model of postselection is more in alignment with Aaronson's original
definition. We establish some basic properties of the related language
classes and the relationships among them. It turns out that classical
probabilistic finite automata (PFAs) with
(our kind of) postselection are strictly more powerful than ordinary PFAs, and that QFAs with postselection are even more powerful than their classical counterparts. We also prove that QFAs with postselection have the same computational power as the
recently introduced real-time QFAs with restart \cite{YS10B}, and allowing a small but positive error to be committed by a finite automaton with postselection enlarges the class of recognized languages in comparison to the zero-error case.

\section{Standard models of probabilistic and quantum finite automata} \label{sec:PFA-QFA}

\subsection{Probabilistic finite automata} \label{sec:PFA}

A real-time probabilistic finite automaton (RT-PFA) is a 5-tuple
\begin{equation}
      \mathcal{P}=(Q,\Sigma,\{ A_{\sigma \in \tilde{\Sigma}} \},q_{1},Q_{a}),
\end{equation}
where $ Q $ is the set of internal states,    $ q_{1} $ is the
initial state, $ Q_{a} \subseteq Q
$ is the set
of accepting states, $ \Sigma $ is the input alphabet, not containing the end-markers
$ \cent $ and $ \dollar $, $ \tilde{\Sigma} = \Sigma \cup \{ \cent, \dollar \} $, and the $ A_{\sigma} $ are  transition matrices, whose columns are
stochastic vectors, such that $ A_{\sigma} $'s $ (j,i)^{th}
$ entry, denoted  $ A_{\sigma}[j,i] $, is the probability of the transition from state $
q_{i} $ to state $ q_{j} $
when reading symbol $ \sigma $.

The computation of a RT-PFA can be traced by a stochastic state vector,
say $ v $, whose $ i^{th} $ entry, denoted
$ v[i] $, corresponds to state $ q_{i} $.
For a given input string $ w \in \Sigma^{*} $ (the string read by the machine is $ \tilde{w} = \cent w \dollar $),
\begin{equation}
\label{equation:vAv}
      v_{i} = A_{\tilde{w}_{i}} v_{i-1},
\end{equation}
where $ \tilde{w}_{i} $ denotes the $ i^{th} $ symbol of $ \tilde{w} $,  $ 1 \le i \le | \tilde{w} | $, and
$ v_{0} $ is the initial state vector, whose first entry is 1. ($ |\tilde{w}| $  denotes the length of $ \tilde{w} $.)
The transition matrices of a RT-PFA can be extended for any string as
\begin{equation}
      A_{w\sigma} = A_{\sigma} A_{w},
\end{equation}
where $ w \in (\tilde{\Sigma})^{*} $, $ \sigma \in \tilde{\Sigma} $, and
$ A_{\varepsilon} = I $ ($ \varepsilon $ denotes the empty string).
The  probability that RT-PFA $ \mathcal{P} $ will accept string $ w $ is
\begin{equation}
      f_{\mathcal{P}}^{a}(w) = \sum_{q_{i} \in Q_{a}} (A_{\tilde{w}}v_{0})[i] =
              \sum_{q_{i} \in Q_{a}} v_{|\tilde{w}|}[i].
\end{equation}
The probability that $ \mathcal{P} $ will reject string $ w $ is $f_{\mathcal{P}}^{r}(w)=1-f_{\mathcal{P}}^{a}(w)$.

The language $ L \subseteq \Sigma^{*} $ recognized by machine $
\mathcal{M} $ with (strict) cutpoint
$ \lambda \in \mathbb{R} $ is defined as
\begin{equation}
       L = \{ w \in \Sigma^{*} \mid f_{\mathcal{M}}^{a}(w) > \lambda \}.
\end{equation}
The languages recognized by RT-PFAs 
 with cutpoint
 form the class of \textit{stochastic languages}, denoted $ \mathsf{S} $.

The language $ L \subseteq \Sigma^{*} $ recognized by machine $
\mathcal{M} $ with nonstrict cutpoint
$ \lambda \in \mathbb{R} $ is defined as \cite{BJKP05}
\begin{equation}
       L = \{ w \in \Sigma^{*} \mid f_{\mathcal{M}}^{a}(w) \geq \lambda \}.
\end{equation}
The languages recognized by RT-PFAs 
 with nonstrict cutpoint form the class of \textit{co-stochastic languages}, denoted $ \mathsf{coS} $.

$ \mathsf{S} $ $ \cup $ $ \mathsf{coS} $ (denoted $ \mathsf{uS} $) 
is the class of languages recognized by RT-PFAs
with unbounded error.
 
Probabilistic automata that recognize a language with cutpoint zero are identical to nondeterministic automata, in particular, the class of languages recognized by RT-PFAs with cutpoint zero is $ \mathsf{REG} $ \cite{Bu67}, the class of regular languages.

The language $ L \subset \Sigma^{*} $ is said to be recognized by machine $
\mathcal{M} $ with error bound $ \epsilon $
($ 0 \le \epsilon < \frac{1}{2} $) if
\begin{itemize}
       \item $ f_{\mathcal{M}}^{a}(w) \ge 1 - \epsilon $ for all $ w \in L $, and,
       \item $ f_{\mathcal{M}}^{r}(w) \ge 1 - \epsilon $ for all $ w \notin L $.
\end{itemize}
This situation is also known as recognition with bounded error.

RT-PFAs recognize precisely the regular languages with bounded error \cite{Ra63}.

Viewing the input as written (between the end-markers) on a suitably long tape, with each tape square containing one symbol from $ \tilde{\Sigma} $, and a tape head moving over the tape, sending the symbol it currently senses to the machine for processing, the RT-PFA model can be augmented by allowing the transition matrices to specify the direction in which the tape head can move in each step, as well as the next state.  The model obtained by legalizing leftward and stationary tape head moves in this manner is named the \textit{two-way probabilistic finite automaton} (2PFA). 2PFAs can recognize some nonregular languages with bounded error in exponential time \cite{Fr81}.

\subsection{Quantum finite automata} \label{sec:QFA}

A real-time quantum finite automaton (RT-QFA) \cite{Hi08,Ya11A,YS11A} is a 5-tuple
\begin{equation}
      \mathcal{M}=(Q,\Sigma,\{\mathcal{E}_{\sigma \in
\tilde{\Sigma}}\},q_{1},Q_{a}),
\end{equation}
where $ Q$, $\Sigma$, $q_{1}$, and $Q_{a} $ are as defined above for RT-PFAs, and $ \mathcal{E}_{\sigma } $ is an admissible operator having the elements
$ \{ E_{\sigma,1},\ldots,E_{\sigma,k} \} $ for some $ k \in \mathbb{Z}^{+} $
satisfying
\begin{equation}
      \sum_{i=1}^{k} E_{\sigma,i}^{\dagger} E_{\sigma,i} = I.
\end{equation}
Additionally, we define the projector
\begin{equation}
      P_{a} = \sum_{q \in Q_{a}} \ket{q}\bra{q}
\end{equation}
in order to check for acceptance.
For a given input string $ w \in \Sigma^{*} $ (the string read by the machine is $ \tilde{w} = \cent w \dollar $), the overall state of the machine can be traced by
\begin{equation}
      \rho_{j} = \mathcal{E}_{\tilde{w}_{j}} (\rho_{j-1}) =
      \sum_{i=1}^{k} E_{\tilde{w}_{j},i} \rho_{j-1}
E_{\tilde{w}_{j},i}^{\dagger},
\end{equation}
where $ 1 \le j \le | \tilde{w} |  $ and $ \rho_{0} = \ket{q_{1}}
\bra{q_{1}} $ is the initial density matrix.
The transition operators can  be extended easily for any string as
\begin{equation}
      \mathcal{E}_{w \sigma} = \mathcal{E}_{\sigma} \circ \mathcal{E}_{w},
\end{equation}
where $ w \in (\tilde{\Sigma})^{*} $, $ \sigma \in \tilde{\Sigma} $,
and $ \mathcal{E}_{\varepsilon} = I $.
(Note that
$ \mathcal{E}^{\prime} \circ \mathcal{E} $ is described by the collection  $ \{ E^{\prime}_{j} E_{i}
\mid 1 \le i \le k, 1 \le j \le k^{\prime} \} $,
when $ \mathcal{E} $ and $ \mathcal{E}^{\prime} $ are described by the collections 
$ \{E_{i} \mid 1 \le i \le k\} $ and $ \{E_{j}^{\prime} \mid 1 \le j \le k^{\prime}\} $, respectively.)
The  probability that  RT-QFA $ \mathcal{M} $ will accept input string $ w $ is
\begin{equation}
      f_{\mathcal{M}}^{a}(w) = tr( P_{a}
\mathcal{E}_{\tilde{w}}(\rho_{0})) = tr(P_{a} \rho_{| \tilde{w} |} ).
\end{equation}
The class of languages recognized by RT-QFAs with cutpoint (respectively, nonstrict cutpoint)
are denoted $ \mathsf{QAL} $ (respectively, $ \mathsf{coQAL} $).
$ \mathsf{QAL} $ $ \cup $ $ \mathsf{coQAL} $, denoted $ \mathsf{uQAL} $, 
is the class of languages recognized by RT-QFAs with unbounded error.
Any quantum automaton with a sufficiently general definition can simulate its probabilistic counterpart, so one 
direction of the relationships that we report among probabilistic and quantum language classes is always easy to 
see. It is known that $ \mathsf{S} $ = $ \mathsf{QAL} $ , $ \mathsf{coS} $ = $ \mathsf{coQAL} $, 
and $ \mathsf{uS} $ = $ \mathsf{uQAL} $ \cite{Ya11A,YS11A}.
The class of languages recognized by RT-QFAs with cutpoint zero, denoted
$ \mathsf{NQAL} $, is a proper superclass of $ \mathsf{REG} $, and is not closed under complementation \cite{YS10A}.
The class of languages whose complements are in $ \mathsf{NQAL} $ is denoted $ \mathsf{coNQAL} $.
RT-QFAs recognize precisely the regular languages with bounded error \cite{KW97,Bo03,Je07,AY11A}.

\subsection{Probabilistic and quantum finite automata with restart} \label{sec:restart}

In this subsection, we review models of  finite automata with restart (see \cite{YS10B} for details).
Since these are two-way machines, the input is written on a tape scanned by a two-way
tape head. For a given input string $ w \in \Sigma $, $ \tilde{w} $ is written on tape, the tape squares are indexed by integers, and $ \tilde{w} $
is written on the squares indexed $ 1 $ through $ |\tilde{w}| $.
For these machines, we assume that after reading the right end-marker $ \dollar $, the input head never tries to visit 
the square indexed by $ |\tilde{w}|+1 $.

A real-time probabilistic finite automaton with restart (RT-PFA$ ^{\circlearrowleft}  $), can be seen as an augmented RT-PFA, and  a 7-tuple
\begin{equation}
      \mathcal{P}=(Q,\Sigma,\{ A_{\sigma \in \tilde{\Sigma}} \},q_{1},Q_{a},Q_{r},Q_{\circlearrowleft} ),
\end{equation}
where $ Q_{r} $ is the set of reject states, and
$ Q_{\circlearrowleft}  $ is the set of restart states.
Moreover, $ Q_{n} = Q \setminus ( Q_{a} \cup Q_{r} \cup Q_{\circlearrowleft} ) $  is the set of
nonhalting and nonrestarting states.
The processing of input symbols by a RT-PFA$ ^{\circlearrowleft}  $ is performed according to Equation \ref{equation:vAv}, as in the RT-PFA, with the additional feature that
after each transition, the internal state $q$ is checked, with the following consequences:
\begin{itemize}
	\item ($  ``\circlearrowleft" $) if $ q \in Q_{\circlearrowleft} $, the computation is restarted 
		(the internal state is set to $ q_{1} $ and the input head is replaced to the square indexed by $ 1 $);
	\item ($ ``a" $) if $ q \in Q_{a} $, the computation is terminated with the decision of acceptance;
	\item ($ ``r" $) if $ q \in Q_{r} $, the computation is terminated with the decision of rejection;
	\item ($ ``n" $) if $ q \in Q_{n} $, the input head is moved one square to the right.
\end{itemize}

The quantum counterpart of RT-PFAs presented in \cite{YS10B} has a parallel definition, but with a 
real-time Kondacs-Watrous quantum finite automaton\footnote{
	We refer the reader to \cite{KW97,Ya11A,YS11A} for details of this QFA variant.
} (RT-KWQFA), rather than a RT-QFA taken as basis. 
A real-time (Kondacs-Watrous) quantum finite automaton with restart (RT-KWQFA$ ^{\circlearrowleft}  $) is a 7-tuple
\begin{equation}
      \mathcal{M}=(Q,\Sigma,\{ U_{\sigma \in \tilde{\Sigma}} \},q_{1},Q_{a},Q_{r},Q_{\circlearrowleft} ),
\end{equation}
where
$ \{ U_{\sigma \in \tilde{\Sigma}} \} $ is a set of unitary transition matrices defined 
for each $ \sigma \in \tilde{\Sigma} $.
The computation of $ \mathcal{M} $ starts with $ \ket{q_{1}} $. 
At each step of the computation, the transition matrix corresponding to the current input symbol, say $ U_{\sigma} $, is applied on the current state vector, 
say $ \ket{\psi} $,
belonging to the state space $ \mathcal{H}_{Q} $, spanned by $ \ket{q_{1}}, \ldots, \ket{q_{|Q|}} $,
and then we obtain new state vector $ \ket{\psi'} $, i.e.
\begin{equation}
	\ket{\psi'} = U_{\sigma} \ket{\psi}.
\end{equation}
After that, a projective  measurement
\begin{equation}
	P = \{ P_{\tau \in \Delta} \mid P_{\tau} = \sum_{q \in Q_{\tau}} \ket{q}\bra{q} \}.
\end{equation}
with outcomes 
$ \Delta=\{``\circlearrowleft",``a",``r",``n"\} $ is performed on the state space.
After the measurement, the machine acts according to the measurement outcomes,
as listed above for RT-PFA$ ^{\circlearrowleft} $s.
Note that the new state vector is normalized after the measurement, and the state vector is set to $ \ket{q_{1}} $ when the computation 
is restarted .

A segment of computation of an automaton with restart $ \mathcal{A} $ which begins with a (re)start, and ends with
a halting or restarting state will be called a \textit{round}. Let $ p^{a}_{\mathcal{A}} (w) $ ($ p^{r}_{\mathcal{A}} (w) $) be the probability that $w$ is accepted (rejected) in a single round of $ \mathcal{A} $. For a given input string $ w \in \Sigma^{*} $,
the overall acceptance and rejection probabilities of $w$ ($ f_{\mathcal{A}}^{a}(w)$ and  $f_{\mathcal{A}}^{r}(w)$, respectively,) can be calculated as shown in the following lemma.

\begin{lemma}
	\label{lem:overall-acc-rej}
	$ f_{\mathcal{A}}^{a}(w)=\frac{ p_{\mathcal{A}}^{a} (w) }{ p_{\mathcal{A}}^{a} (w) + p_{\mathcal{A}}^{r} (w) } $
	and 
	$ f_{\mathcal{A}}^{r}(w)=\frac{ p_{\mathcal{A}}^{r} (w) }{ p_{\mathcal{A}}^{a} (w) + p_{\mathcal{A}}^{r} (w) } $.
\end{lemma}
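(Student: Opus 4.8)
The plan is to exploit the fact that a restart resets the machine to exactly its starting configuration, so that the successive rounds of $ \mathcal{A} $ on a fixed input $ w $ form a sequence of independent, identically distributed trials. First I would make this precise: whether $ \mathcal{A} $ is a RT-PFA$^{\circlearrowleft}$ or a RT-KWQFA$^{\circlearrowleft}$, when a round ends in a restart the internal state is set back to $ q_{1} $ and the input head is replaced to the square indexed by $ 1 $, which is precisely the initial configuration of the machine; hence the distribution over the three possible outcomes of a round --- ``accept'', ``reject'', ``restart'' --- does not depend on how many rounds have already occurred. Write $ p_{\mathcal{A}}^{\circlearrowleft}(w) = 1 - p_{\mathcal{A}}^{a}(w) - p_{\mathcal{A}}^{r}(w) $ for the single-round restart probability; this sum identity holds because on a finite input a real-time round is forced to end in one of these three ways after at most $ |\tilde{w}| $ steps.

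Next I would expand the overall acceptance event as a disjoint union over the index of the round in which $ \mathcal{A} $ halts: the machine accepts $ w $ iff, for some $ n \ge 0 $, its first $ n $ rounds all end in a restart and round $ n+1 $ ends in acceptance, an event whose probability is $ \left(p_{\mathcal{A}}^{\circlearrowleft}(w)\right)^{n} p_{\mathcal{A}}^{a}(w) $ by the independence just established. Summing the geometric series gives
\begin{equation}
  f_{\mathcal{A}}^{a}(w) = p_{\mathcal{A}}^{a}(w) \sum_{n=0}^{\infty} \left(p_{\mathcal{A}}^{\circlearrowleft}(w)\right)^{n} = \frac{ p_{\mathcal{A}}^{a}(w) }{ 1 - p_{\mathcal{A}}^{\circlearrowleft}(w) } = \frac{ p_{\mathcal{A}}^{a}(w) }{ p_{\mathcal{A}}^{a}(w) + p_{\mathcal{A}}^{r}(w) },
\end{equation}
and the symmetric computation --- or the identity $ f_{\mathcal{A}}^{a}(w) + f_{\mathcal{A}}^{r}(w) = 1 $ --- yields the companion formula for $ f_{\mathcal{A}}^{r}(w) $.

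Two points need care. The series converges and the manipulation is valid exactly when $ p_{\mathcal{A}}^{\circlearrowleft}(w) < 1 $, i.e.\ $ p_{\mathcal{A}}^{a}(w) + p_{\mathcal{A}}^{r}(w) > 0 $; this is precisely the condition under which the right-hand sides of the statement are defined, and it also ensures (since $ \left(p_{\mathcal{A}}^{\circlearrowleft}(w)\right)^{n} \to 0 $) that $ \mathcal{A} $ halts with probability $ 1 $, so that no probability mass escapes to infinite runs --- I would record this as the standing assumption on machines with restart. The only genuinely delicate step is the independence claim in the quantum case: a round of a RT-KWQFA$^{\circlearrowleft}$ contains an intermediate projective measurement after every symbol, but conditioning on the round ending in a restart leaves the post-measurement state equal to $ \ket{q_{1}} $ --- this is the meaning of ``the state vector is set to $ \ket{q_{1}} $ when the computation is restarted'' --- so no information passes between consecutive rounds and they are genuinely i.i.d. I expect pinning down this point to be the main obstacle; the rest is the routine geometric-series evaluation.
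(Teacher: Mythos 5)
Your proof is correct and takes essentially the same route as the paper's: the paper's entire argument is the one-line geometric series $f_{\mathcal{A}}^{a}(w)=\sum_{i=0}^{\infty}\left(1-p_{\mathcal{A}}^{a}(w)-p_{\mathcal{A}}^{r}(w)\right)^{i}p_{\mathcal{A}}^{a}(w)$, which is exactly your decomposition over the index of the halting round. The extra care you take --- justifying that rounds are i.i.d.\ because a restart returns the machine (even the quantum one, post-measurement) to its exact initial configuration, and noting that the manipulation requires $p_{\mathcal{A}}^{a}(w)+p_{\mathcal{A}}^{r}(w)>0$ --- is left implicit in the paper but is the right thing to pin down.
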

\begin{proof}
	\begin{eqnarray*}
		f_{\mathcal{A}}^{a}(w) & = &
        \sum_{i=0}^{\infty}\left(1-p_{\mathcal{A}}^{a}(w)-p_{\mathcal{A}}^{r}(w) \right)^{i}
        p_{\mathcal{A}}^{a}(w)\\
		& = & p_{\mathcal{A}}^{a}(w) \left(
		\dfrac{1}{1-(1-p_{\mathcal{A}}^{a}(w)-p_{\mathcal{A}}^{r}(w))} \right) \\
		& = &
		\dfrac{p_{\mathcal{A}}^{a}(w)}{p_{\mathcal{A}}^{a}(w)+p_{\mathcal{A}}^{r}(w)}
	\end{eqnarray*}
	$ f_{\mathcal{A}}^{r}(w) $ is calculated in the same way.
\end{proof}

Moreover, if $ \mathcal{A} $ recognizes a language with error bound $ \epsilon < \frac{1}{2} $,
we have the following relation.
\begin{lemma}
	\label{lem:bounded-error}
    The language $ L \subseteq \Sigma^{*} $ is recognized by $ \mathcal{A} $ with error bound $ \epsilon $
	if and only if $ \frac{ p_{\mathcal{A}}^{r}(w) }{ p_{\mathcal{A}}^{a}(w) } 
	\le \frac{\epsilon}{1-\epsilon} $ when $ w \in L $, 
	and $ \frac{ p_{\mathcal{A}}^{a}(w) }{ p_{\mathcal{A}}^{r}(w) } \le \frac{\epsilon}{1-\epsilon} $
	when $ w \notin L $.
	Furthermore, if $ \frac{p_{\mathcal{A}}^{r}(w)}{p_{\mathcal{A}}^{a}(w)} $ 
	$ \left( \frac{p_{\mathcal{A}}^{a}(w)}{p_{\mathcal{A}}^{r}(w)} \right) $ is at most $ \epsilon $, then
	$ f_{\mathcal{A}}^{a}(w) $ $ ( f_{\mathcal{A}}^{r}(w) ) $ is at least $ 1-\epsilon $.
\end{lemma}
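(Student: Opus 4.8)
The plan is to derive everything from Lemma~\ref{lem:overall-acc-rej}, which already expresses the overall acceptance and rejection probabilities as the normalized single-round probabilities $f_{\mathcal{A}}^{a}(w)=\frac{p_{\mathcal{A}}^{a}(w)}{p_{\mathcal{A}}^{a}(w)+p_{\mathcal{A}}^{r}(w)}$ and $f_{\mathcal{A}}^{r}(w)=\frac{p_{\mathcal{A}}^{r}(w)}{p_{\mathcal{A}}^{a}(w)+p_{\mathcal{A}}^{r}(w)}$. Throughout I would assume $p_{\mathcal{A}}^{a}(w)+p_{\mathcal{A}}^{r}(w)>0$, since otherwise a round of $\mathcal{A}$ almost surely fails to halt, $w$ is never decided, and the question of recognition with error bound $\epsilon$ does not arise; in that degenerate case the ratios in the statement are read as $+\infty$ and the claimed inequalities simply fail, consistently.

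For the forward direction of the ``if and only if'', take $w\in L$. By the definition of recognition with error bound $\epsilon$ we have $f_{\mathcal{A}}^{a}(w)\ge 1-\epsilon$; substituting the expression from Lemma~\ref{lem:overall-acc-rej}, this is $p_{\mathcal{A}}^{a}(w)\ge(1-\epsilon)\bigl(p_{\mathcal{A}}^{a}(w)+p_{\mathcal{A}}^{r}(w)\bigr)$. Collecting terms yields $\epsilon\,p_{\mathcal{A}}^{a}(w)\ge(1-\epsilon)\,p_{\mathcal{A}}^{r}(w)$, i.e.\ $\frac{p_{\mathcal{A}}^{r}(w)}{p_{\mathcal{A}}^{a}(w)}\le\frac{\epsilon}{1-\epsilon}$; here $p_{\mathcal{A}}^{a}(w)>0$ because $\epsilon<\frac12$ forces $f_{\mathcal{A}}^{a}(w)>\frac12>0$. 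Each step is an equivalence, so the converse holds too, and the case $w\notin L$ is completely symmetric, interchanging the roles of $p_{\mathcal{A}}^{a}$ and $p_{\mathcal{A}}^{r}$ (and of acceptance and rejection).

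For the ``furthermore'' part, the key observation is that $\frac{\epsilon}{1-\epsilon}\ge\epsilon$ whenever $0\le\epsilon<\frac12$ (indeed $\frac{1}{1-\epsilon}\ge 1$). Hence if $\frac{p_{\mathcal{A}}^{r}(w)}{p_{\mathcal{A}}^{a}(w)}\le\epsilon$, then a fortiori $\frac{p_{\mathcal{A}}^{r}(w)}{p_{\mathcal{A}}^{a}(w)}\le\frac{\epsilon}{1-\epsilon}$, and running the computation of the first direction in reverse gives $f_{\mathcal{A}}^{a}(w)\ge 1-\epsilon$; the parenthetical claim about $f_{\mathcal{A}}^{r}(w)$ follows by the symmetric substitution.

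There is no genuine obstacle: the content is purely this algebraic rearrangement together with Lemma~\ref{lem:overall-acc-rej}. The only points needing a moment's care are the handling of vanishing denominators and non-halting rounds, dispatched by the convention above, and the small amount of sign bookkeeping that relies on $\epsilon<\frac12$ to ensure the relevant single-round probability is strictly positive so that the ratios are well defined.
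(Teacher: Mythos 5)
Your proof is correct, and it is essentially the intended argument: the paper itself only cites \cite{YS10B} for this lemma, but the proof there is exactly this algebraic rearrangement of the normalized probabilities from Lemma~\ref{lem:overall-acc-rej}, with the reversibility of each step giving the biconditional and the observation $\epsilon\le\frac{\epsilon}{1-\epsilon}$ giving the ``furthermore'' clause. Your explicit handling of the degenerate case $p_{\mathcal{A}}^{a}(w)+p_{\mathcal{A}}^{r}(w)=0$ and of the positivity of the denominators is a welcome extra bit of care.
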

\begin{proof}
       See \cite{YS10B}.
\end{proof}

\begin{lemma}
 \label{lemma:expected-runtime}
 Let $ p_{\mathcal{A}}(w)= p_{\mathcal{A}}^{a}(w) + p_{\mathcal{A}}^{r}(w) $, and let $ s_{\mathcal{A}}(w) $ be the maximum number
of steps in any branch of a
 round of $ \mathcal{A} $ on $ w $.
 The worst-case expected runtime of $ \mathcal{A} $ on $ w $ is
 \begin{equation}
       \label{equation:expected-runtime}
       \frac{1}{p_{\mathcal{A}}(w)} (s_{\mathcal{A}}(w)).
 \end{equation}
\end{lemma}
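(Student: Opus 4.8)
The plan is to decompose the computation of $\mathcal{A}$ on $w$ into its successive rounds and to exploit the fact that a restart resets both the internal state (to $q_{1}$) and the input head (to the square indexed by $1$), so that each round is an independent trial probabilistically identical to the first one. In particular, a round ends in a halting decision (``$a$'' or ``$r$'') with probability exactly $p_{\mathcal{A}}(w)=p_{\mathcal{A}}^{a}(w)+p_{\mathcal{A}}^{r}(w)$, and otherwise restarts with the remaining probability $1-p_{\mathcal{A}}(w)$. Hence, assuming $p_{\mathcal{A}}(w)>0$ (if $p_{\mathcal{A}}(w)=0$ the machine never halts and the asserted value is understood as $\infty$), the index $N$ of the round in which $\mathcal{A}$ finally halts is geometrically distributed, with $\Pr[N=i]=(1-p_{\mathcal{A}}(w))^{i-1}p_{\mathcal{A}}(w)$, so that $\mathbb{E}[N]=\sum_{i\ge 1} i\,(1-p_{\mathcal{A}}(w))^{i-1}p_{\mathcal{A}}(w)=\frac{1}{p_{\mathcal{A}}(w)}$.

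Next I would turn this into a statement about the number of steps. Round $i$ is executed precisely when the first $i-1$ rounds all restarted, an event of probability $(1-p_{\mathcal{A}}(w))^{i-1}$, and whenever a round is executed it consumes at most $s_{\mathcal{A}}(w)$ steps, by the very definition of $s_{\mathcal{A}}(w)$ as the maximum number of steps along any branch of a round. Writing $T$ for the total number of steps of $\mathcal{A}$ on $w$ and summing over rounds by linearity of expectation,
\begin{equation*}
\mathbb{E}[T]\ \le\ \sum_{i=1}^{\infty}(1-p_{\mathcal{A}}(w))^{i-1}\,s_{\mathcal{A}}(w)\ =\ \frac{1}{p_{\mathcal{A}}(w)}\bigl(s_{\mathcal{A}}(w)\bigr),
\end{equation*}
using $\sum_{i\ge 1}(1-p)^{i-1}=1/p$. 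Equivalently, one may read the ``worst-case expected runtime'' directly as $\mathbb{E}[N]\cdot s_{\mathcal{A}}(w)$ (expected number of rounds times the pessimistic per-round length), which gives Equation \ref{equation:expected-runtime} as an equality; this is why the lemma is stated with ``$=$''.

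The only genuinely delicate point is the independence of the rounds that underlies the geometric distribution: one must verify that the probability that a given round ends in a restart is unaffected by how many restarts preceded it. This is exactly where the model's stipulation of resetting the head to square $1$ and the state to $q_{1}$ is essential, and, in the quantum (RT-KWQFA$^{\circlearrowleft}$) case, where the renormalization of the state vector after a restart-measurement matters, so that every round truly begins from $\ket{q_{1}}$. Once that is granted, the remaining computation is the routine geometric-series evaluation above, and no further subtlety arises.
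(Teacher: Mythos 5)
Your proof is correct and follows essentially the same route as the paper's: both reduce the claim to the geometric distribution of the halting round and evaluate $s_{\mathcal{A}}(w)$ times the expected number of rounds as $\frac{1}{p_{\mathcal{A}}(w)}s_{\mathcal{A}}(w)$ (the paper sums $\sum_{i\ge 0}(i+1)(1-p)^{i}p\,s$ directly, while you use the equivalent tail-sum $\sum_{i\ge 1}(1-p)^{i-1}s$). Your added remarks on the $p_{\mathcal{A}}(w)=0$ edge case and on the independence of rounds after a restart are sensible but not part of the paper's argument.
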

\begin{proof}
 The worst-case expected running time of $ \mathcal{A} $ on $ w $ is
    \begin{equation}
            \label{equation:expected-runtime-proof}
            \begin{array}{ll}
            	\multicolumn{2}{l}{\sum_{i=0}^{\infty} 
            		(i+1)(1-p_{\mathcal{A}}(w))^{i} (p_{\mathcal{A}}(w))(s_{\mathcal{A}}(w))} \\
            		~~ & = (p_{\mathcal{A}}(w))(s_{\mathcal{A}}(w))\frac{1}{p_{\mathcal{A}}(w)^{2}} \\
            		& =\frac{1}{p_{\mathcal{A}}(w)}(s_{\mathcal{A}}(w)).
            \end{array}
    \end{equation}
\end{proof}

In this paper, we will find it useful to use automata with restart that employ the restart move  only when the input head is at the right end of the input tape. It is obvious that the computational power of RT-PFA$ ^{\circlearrowleft} $s does not change if the act of entering the halting states is postponed to the end of the computation. For the quantum version, it is more convenient to use the general QFA model described in Section \ref{sec:PFA-QFA}, rather than the KWQFA, as the building block of the RT-QFA$^{\circlearrowleft} $ model for this purpose. 
We use the denotation RT-QFA$ ^{\circlearrowleft}$ for this variant of quantum automata with restart.

		A real-time \textit{general} quantum finite automaton with restart (RT-QFA$ ^{\circlearrowleft} $) 
		is a 6-tuple
		\begin{equation}
			\mathcal{M} = (Q,\Sigma,\{\mathcal{E}_{\sigma \in \tilde{\Sigma}}\},q_{1},Q_{a},Q_{r}),
		\end{equation} 
		where all specifications are the same as RT-QFA (see Section \ref{sec:QFA}),
		except that
		\begin{itemize}
			\item $ Q_{r} $ is the set of rejecting states;
			\item $ Q_{\circlearrowleft} = Q \setminus ( Q_{a} \cup Q_{r} ) $ is the set of  restart states;
			\item $ \Delta = \{a,r,\circlearrowleft\} $ with the following specifications:
				\begin{itemize}
					\item ''a": halt and accept,
					\item ''r": halt and reject, and
					\item ''$ \circlearrowleft $":  restart the computation.
				\end{itemize}
				The corresponding projectors, $ P_{a}, P_{r}, P_{\circlearrowleft} $, 
				are defined in a standard way, based on the related sets of states, 
				$ Q_{a}, Q_{r}, Q_{\circlearrowleft} $, respectively.
		\end{itemize}

	Note that a RT-KWQFA$ ^{\circlearrowleft}  $ can be simulated by a RT-QFA$ ^{\circlearrowleft} $
	in a straightforward way, by postponing each intermediate measurement to the end of the computation.
	Formally, for a given RT-KWQFA$ ^{\circlearrowleft}  $ 
	$ \mathcal{M} = ( Q,\Sigma,\{U_{\sigma \in \tilde{\Sigma}}\},q_{1},Q_{a},Q_{r},Q_{\circlearrowleft} ) $,
	can be exactly simulated by RT-QFA$ ^{\circlearrowleft} $
	$ \mathcal{M}' = (Q,\Sigma,\{ \mathcal{E}_{\sigma \in \tilde{\Sigma}} \},q_{1},Q_{a},Q_{r}) $,
	where, for each $ \sigma \in \tilde{\Sigma} $, 
	$ \mathcal{E}_{\sigma} = \{E_{\sigma,i} \mid 1 \le i \le 4 \} $ can be defined as follows:
	\begin{itemize}
		\item $ E_{\sigma,1} $ is obtained from $ U_{\sigma} $ by keeping all transitions from 
			the nonhalting states to the others and replacing the others with zeros;
		\item $ E_{\sigma,2} $, $ E_{\sigma,3} $, and $ E_{\sigma,4} $ are zero-one diagonal matrices 
			whose entries are 1 only for the transitions leaving restarting, accepting, and rejecting states, 
			respectively.
	\end{itemize}

The following theorem lets us conclude that the two variants of QFAs with restart are equivalent in language recognition power.

\begin{theorem}
	\label{thm:RT-QFA-restart-simulated-by-RT-KWQFA-restart}
	Any language $ L \subseteq \Sigma^{*} $ recognized by an $ n $-state RT-QFA$ ^{\circlearrowleft} $
	with error bound $ \epsilon $ can be recognized by a $ O(n) $-state RT-KWQFA$ ^{\circlearrowleft} $
	with the same error bound. 
	Moreover, if the expected runtime of the RT-QFA$ ^{\circlearrowleft} $ is $ O(s(|w|)) $,
	then, for a constant $ l>1 $,
	the expected runtime of the RT-KWQFA$  ^{\circlearrowleft} $ is $ O(l^{2|w|}s^{2}(|w|)) $,
	where $ w $ is the input string.
\end{theorem}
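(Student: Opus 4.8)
The plan is to convert a given RT-QFA$^{\circlearrowleft}$ $\mathcal{M}$ into an RT-KWQFA$^{\circlearrowleft}$ $\mathcal{M}'$ by two successive transformations: first unfold each admissible (Kraus) operator $\mathcal{E}_\sigma$ of $\mathcal{M}$ into a unitary by the standard Stinespring-type dilation, introducing ancillary outcome registers; then simulate the multi-outcome projective measurement that $\mathcal{M}'$ must perform at each step (with outcome set $\{a,r,\circlearrowleft\}$) in a way compatible with the Kondacs--Watrous formalism. The key point is that in the KWQFA model the only measurement available after each unitary step is the $\Delta$-measurement into halting/restarting/nonhalting subspaces, so all the ``internal'' Kraus branching of $\mathcal{M}$ must be carried out on the nonhalting subspace without triggering a halt. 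First I would fix, for each $\sigma$, the operator elements $\{E_{\sigma,1},\dots,E_{\sigma,k}\}$ and build a unitary $U_\sigma$ acting on $\mathcal{H}_Q \otimes \mathcal{H}_{\mathrm{anc}}$, where $\mathcal{H}_{\mathrm{anc}}$ has dimension $k$ (reused across all $\sigma$, so $O(kn)$ states total, which is $O(n)$ since $k$ is a constant depending only on $\mathcal{M}$), such that $U_\sigma \ket{\psi}\ket{0} = \sum_i (E_{\sigma,i}\ket{\psi})\ket{i}$; this is always possible because $\sum_i E_{\sigma,i}^\dagger E_{\sigma,i} = I$ guarantees the map is an isometry on the $\ket{\cdot}\ket{0}$ subspace, extendable to a unitary.

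Next I would address the measurement mismatch. In $\mathcal{M}'$ I designate states so that the ancilla register being in $\ket{0}$ means ``keep computing'' and being in $\ket{i}$ for $i\ge 1$ means ``branch $i$ has just been applied''; the problem is that $\mathcal{M}'$ is forced to measure the $\Delta$-partition after every unitary, so I cannot leave the ancilla in a superposition of $\ket{i}$'s and continue. The fix is to interleave: after applying $U_\sigma$, perform a bookkeeping step that maps the information ``which Kraus branch'' into the halting/restarting partition only when $\mathcal{M}$ would actually halt or restart, and otherwise coherently rotates the ancilla back to $\ket 0$ combined with the next-symbol unitary. Concretely, I would let each ``macro-step'' of $\mathcal{M}$ correspond to two moves of $\mathcal{M}'$ on the same tape square (hence the constant-factor-squared blowup in the running-time bound will come from somewhere else — see below): one move applies the dilation unitary, and — since in the KWQFA model every tape square is read once — I instead fold the ancilla-reset into the unitary for the \emph{next} symbol, so that $U'_\tau := (\text{reset/branch-sort}) \circ U_\tau$ remains unitary. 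The $\Delta$-measurement after $U'_\tau$ then exactly reproduces the accept/reject/restart probabilities of the corresponding admissible operator of $\mathcal{M}$, by construction. One must be careful at the end-markers: the final operator $\mathcal{E}_\$$ produces the halting measurement, and there the ancilla-sorting must route the $k$ branches into $Q_a$, $Q_r$, or $Q_\circlearrowleft$ according to which states of $\mathcal{M}$ each branch lands in.

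For the running-time claim, note that $\mathcal{M}'$ faithfully reproduces, in each round, the single-round acceptance and rejection probabilities of $\mathcal{M}$ — call them $p^a_\mathcal{M}(w)$ and $p^r_\mathcal{M}(w)$ — so by Lemma~\ref{lem:overall-acc-rej} the overall acceptance/rejection probabilities, and hence the error bound $\epsilon$, are preserved exactly; that gives the ``$O(n)$ states, same error bound'' part. For the expected runtime, Lemma~\ref{lemma:expected-runtime} says it equals $s_\mathcal{M}(w)/p_\mathcal{M}(w)$ where $p_\mathcal{M}(w)=p^a_\mathcal{M}(w)+p^r_\mathcal{M}(w)$ and $s_\mathcal{M}(w)$ is the per-round step count, which is $O(s(|w|))$. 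The simulation preserves $p_\mathcal{M}(w)$ and inflates $s$ by only a constant factor, so naively one gets $O(s(|w|)/p_\mathcal{M}(w)) = O(s^2(|w|))$ using $p_\mathcal{M}(w) = \Omega(1/s(|w|))$ from the hypothesis — but that is not the stated bound. The extra $l^{2|w|}$ factor must arise because the Kondacs--Watrous \emph{intermediate} measurements in the target model (if we route through a KWQFA faithfully rather than postponing them) can leak additional probability out at each step, or because, to keep the ancilla-reset unitary without an extra register, each macro-step of $\mathcal{M}$ has to be spread over a number of KWQFA steps that grows, or a ``padding'' construction is needed so that $p_{\mathcal{M}'}(w)$ can drop to roughly $l^{-2|w|} p_\mathcal{M}(w)$; the factor $l>1$ and the exponent $2|w|$ strongly suggest a per-symbol constant loss $l^{-2}$ in the probability of a round completing, compounded over $|w|$ symbols. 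The main obstacle — and the part I would spend the most care on — is pinning down exactly where this per-symbol probability loss is forced by the KWQFA restriction (most plausibly: simulating a $k$-outcome non-projective branching using only the coarse $\Delta$-measurement forces some ``spurious restart'' probability at each step that scales like a constant $<1$), proving it is bounded below by a fixed $l^{-2}$, and checking the accept-to-reject \emph{ratio} is nonetheless unchanged so that Lemma~\ref{lem:bounded-error} still yields error bound $\epsilon$. Once that loss factor is isolated, plugging $p_{\mathcal{M}'}(w) = \Omega(l^{-2|w|} p_\mathcal{M}(w)) = \Omega(l^{-2|w|}/s(|w|))$ and $s_{\mathcal{M}'}(w) = O(s(|w|))$ into Lemma~\ref{lemma:expected-runtime} gives the claimed $O(l^{2|w|}s^2(|w|))$ expected runtime.
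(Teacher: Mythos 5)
Your overall strategy --- dilate each admissible operator $\mathcal{E}_\sigma$ to a unitary on $\mathcal{H}_Q\otimes\mathcal{H}_{\mathrm{anc}}$ and reuse a constant-size ancilla --- has a gap that you partly sense but never resolve, and it is fatal. After $U_\sigma\ket{\psi}\ket{0}=\sum_i(E_{\sigma,i}\ket{\psi})\ket{i}$ the ancilla is entangled with the state register, and correctness depends on the $k$ Kraus branches remaining mutually incoherent. To process the next symbol you must either reset the ancilla to $\ket{0}$, which is not a unitary operation and cannot be ``folded into the unitary for the next symbol'' (no unitary maps an entangled $k$-dimensional register back onto $\ket{0}$), or keep accumulating branch labels, which needs an ancilla of dimension $k^{|w|}$ rather than $O(1)$. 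The only non-unitary resource in the KWQFA model is the four-outcome $\Delta$-measurement, which neither distinguishes the $k$ ancilla values inside the nonhalting subspace nor re-prepares anything; with a reused constant-size ancilla, branches originating from different steps collide and interfere, and the evolution is no longer $\rho\mapsto\sum_i E_{\sigma,i}\rho E_{\sigma,i}^{\dagger}$. Relatedly, you never actually locate the source of the $l^{2|w|}$ and $s^{2}$ factors: you list several guesses and explicitly leave the ``main obstacle'' open, and your derivation of $O(s^{2})$ rests on two compensating misreadings of Lemma~\ref{lemma:expected-runtime} (for a real-time machine the per-round step count is $O(|w|)$, not $O(s(|w|))$, and correspondingly $p_{\mathcal{G}}(w)=\Omega(|w|/s(|w|))$).

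The paper's proof sidesteps the ancilla problem entirely by simulating the density matrix itself: it vectorizes each superoperator into the $n^{2}\times n^{2}$ matrix $A_\sigma=\sum_i E_{\sigma,i}\otimes E_{\sigma,i}^{*}$ (so the construction in fact uses $O(n^{2})$ states, not $O(n)$), adds two coordinates that accumulate $p^{a}_{\mathcal{G}}(w)$ and $p^{r}_{\mathcal{G}}(w)$, and embeds the resulting non-unitary matrices into unitaries by padding with extra rows and rescaling by $1/l$, with all padding states declared restart states. This is exactly where both ``mystery'' factors come from: the per-symbol rescaling by $1/l$ costs $l^{-2|\tilde{w}|}$ in the per-round halting probability, and because $p^{a}_{\mathcal{G}}(w)$ and $p^{r}_{\mathcal{G}}(w)$ now appear as \emph{amplitudes} rather than probabilities, they are squared upon measurement --- which turns $1/p$ into roughly $1/p^{2}$ in Lemma~\ref{lemma:expected-runtime} (hence $s^{2}$) and, via Lemma~\ref{lem:bounded-error}, yields the improved error $\epsilon'=\epsilon^{2}/(1-2\epsilon+2\epsilon^{2})\le\epsilon$. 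Any repair of your argument would essentially have to adopt this linearization, since no constant-ancilla Stinespring dilation is available inside the KWQFA formalism.
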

\begin{proof}
	See Appendix \ref{app:proof-of-QFA-to-KWQFA}.
\end{proof}
\section{Postselection} \label{section:Postselection}

We are now ready to present our model of the real-time finite automaton with
postselection (RT-PostFA).

\subsection{Definitions} \label{sec:Posdefs}

A RT-PFA with postselection (RT-PostPFA) is a 5-tuple
\begin{equation}
       \mathcal{P}=(Q,\Sigma,\{ A_{\sigma \in \tilde{\Sigma}} \},q_{1},Q_{p}),
\end{equation}
where $ Q_{p} \subseteq Q$, the only item in this definition that differs from that of the standard RT-PFA, is the set of \textit{postselection states}. $ Q_{p}  $ is  the union of two disjoint sets
$ Q_{pa} $ and $ Q_{pr} $, which are called the postselection accept and reject states, respectively. The remaining states in $Q$ form the set of \textit{nonpostselection states}.

A RT-PostPFA can be seen as a standard RT-PFA satisfying the condition that for each input string $ w \in \Sigma^{*} $,
\begin{equation}
       \sum_{q_{i} \in Q_{p}} v_{|\tilde{w}|}[i] > 0.
\end{equation}

The \textit{acceptance and rejection probabilities of input string $w$ by RT-PostPFA $ \mathcal{P} $  before postselection} are defined as
\begin{equation}
       p_{\mathcal{P}}^{a}(w) = \sum_{q_{i} \in Q_{pa}} v_{|\tilde{w}|}[i]
\end{equation}
and
\begin{equation}
       p_{\mathcal{P}}^{r}(w) = \sum_{q_{i} \in Q_{pr}} v_{|\tilde{w}|}[i].
\end{equation}
Note that we are using notation identical to that introduced in the discussion for automata with restart for these probabilities; the reason will be evident shortly.

Finite automata with postselection have the capability of discarding all computation branches except
the ones belonging to $ Q_{p} $ when they arrive at the end of the input. The probabilities that RT-PostPFA $ \mathcal{P} $ will accept or reject string $w$ are obtained by normalization, and are given by
\begin{equation}
\label{eq:postacc}
       f_{\mathcal{P}}^{a}(w) =
\dfrac{p_{\mathcal{P}}^{a}(w)}{p_{\mathcal{P}}^{a}(w)+p_{\mathcal{P}}^{r}(w)},
\end{equation}
and
\begin{equation}
\label{eq:postrej}
       f_{\mathcal{P}}^{r}(w) =
\dfrac{p_{\mathcal{P}}^{r}(w)}{p_{\mathcal{P}}^{a}(w)+p_{\mathcal{P}}^{r}(w)}.
\end{equation}

The class of languages recognized by RT-PostPFAs with bounded error will be denoted $ \mathsf{PostS} $. The subset of $ \mathsf{PostS} $ consisting of languages recognized by RT-PostPFAs with zero error is denoted $ \mathsf{ZPostS} $.

Quantum finite automata with postselection are defined in a manner completely analogous to their classical counterparts, and are based on the RT-QFA model of Section \ref{sec:QFA}. A RT-QFA with postselection (RT-PostQFA) is a 5-tuple
\begin{equation}
       \mathcal{M}=(Q,\Sigma,\{\mathcal{E}_{\sigma \in \tilde{\Sigma}}\},q_{1},Q_{p}),
\end{equation}
satisfying the condition that for each input string $ w \in \Sigma^{*} $,
\begin{equation}
       tr(P_{p} \rho_{| \tilde{w} |} ) > 0,
\end{equation}
where $ P_{p} $ is the projector defined as
\begin{equation}
       P_{p} = \sum_{q \in Q_{p}} \ket{q}\bra{q}.
\end{equation}
Additionally we define the projectors 
\begin{equation}
       P_{pa} = \sum_{q \in Q_{pa}} \ket{q}\bra{q}
\end{equation}
and
\begin{equation}
       P_{pr} = \sum_{q \in Q_{pr}} \ket{q}\bra{q}.
\end{equation}

The \textit{acceptance and rejection probabilities of input string $w$ by RT-PostQFA $ \mathcal{M} $  before postselection} are defined as
\begin{equation}
       p_{\mathcal{M}}^{a}(w) = tr(P_{pa} \rho_{| \tilde{w} |} )
\end{equation}
and
\begin{equation}
       p_{\mathcal{M}}^{r}(w) = tr(P_{pr} \rho_{| \tilde{w} |} ).
\end{equation}
The probabilities $ f_{\mathcal{M}}^{a}(w) $, $ f_{\mathcal{M}}^{r}(w) $ associated by RT-PostQFAs
are defined similarly to those of RT-PostPFAs.

The class of languages recognized by RT-PostQFAs with bounded error is denoted $ \mathsf{PostQAL} $. The subset of $ \mathsf{PostQAL} $ consisting of languages recognized by RT-PostQFAs with zero error is named $ \mathsf{ZPostQAL} $.
\subsection{The power of postselection}

It is evident from the similarity of the statement of Lemma \ref{lem:overall-acc-rej} and Equations \ref{eq:postacc} and \ref{eq:postrej} that there is a close relationship between machines with restart and postselection automata. This is set out in the following theorem.

\begin{theorem}
\label{thm:posres}
	The classes of languages recognized by RT-PFA$ ^{\circlearrowleft} $ and RT-QFA$ ^{\circlearrowleft} $
	with bounded error are identical to $ \mathsf{PostS} $ and $ \mathsf{PostQAL} $, respectively.
\end{theorem}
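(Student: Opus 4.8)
The plan is to prove the theorem by establishing two-way simulations in each direction (classical and quantum), exploiting the fact that both formalisms reduce decision-making to the same normalization formula. The two directions are essentially symmetric, so I would phrase the argument generically for an automaton $\mathcal{A}$ that is either of type RT-PFA$^{\circlearrowleft}$ or RT-QFA$^{\circlearrowleft}$, and its postselection analogue.

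First I would show that a bounded-error automaton with restart can be simulated by a postselection automaton of the same type. Using the convention (justified in the text just before the statement) that the restart automaton performs all its halting/restarting decisions only after reading the right end-marker, a single round of $\mathcal{A}$ is just an ordinary real-time (P/Q)FA computation that ends in one of three outcome-classes: accept, reject, restart. I would build the RT-PostFA $\mathcal{P}$ by taking exactly the same states and transition operators, declaring $Q_{pa} := Q_a$ and $Q_{pr} := Q_r$, so that the non-postselection states are precisely the former restart states $Q_{\circlearrowleft}$. Then $p_{\mathcal{P}}^{a}(w)$ and $p_{\mathcal{P}}^{r}(w)$ equal the single-round accept/reject probabilities $p_{\mathcal{A}}^{a}(w)$, $p_{\mathcal{A}}^{r}(w)$, and by Lemma \ref{lem:overall-acc-rej} together with Equations \ref{eq:postacc}--\ref{eq:postrej} the final probabilities $f^{a}, f^{r}$ coincide exactly; hence $\mathcal{P}$ recognizes $L$ with the same error bound. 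The only side condition to check is that the postselection denominator is strictly positive for every input, which follows because $\mathcal{A}$ halts (accepts or rejects) in a single round with nonzero probability on every input — indeed this is forced by the bounded-error assumption, since a machine that could, on some input, restart forever with probability $1$ would produce no decision at all.

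Conversely, I would simulate a bounded-error RT-PostFA $\mathcal{P}$ by an automaton with restart of the same type. Again keep the states and transitions, set $Q_a := Q_{pa}$, $Q_r := Q_{pr}$, and make every former non-postselection state a restart state. A single round of the resulting $\mathcal{A}$ now has $p_{\mathcal{A}}^{a}(w) = p_{\mathcal{P}}^{a}(w)$ and $p_{\mathcal{A}}^{r}(w) = p_{\mathcal{P}}^{r}(w)$, so Lemma \ref{lem:overall-acc-rej} delivers the same $f$-values as Equations \ref{eq:postacc}--\ref{eq:postrej}, and the error bound is preserved. Since $\mathcal{P}$ is required to have a positive postselection denominator on every input, $p_{\mathcal{A}}^{a}(w)+p_{\mathcal{A}}^{r}(w)>0$, so with probability $1$ $\mathcal{A}$ eventually halts and the overall probabilities are well defined. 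In the quantum case one should note that the RT-QFA$^{\circlearrowleft}$ model is built on the general (admissible-operator) QFA, exactly the model underlying RT-PostQFA, so the operators carry over verbatim with no need for the Kondacs--Watrous detour; Theorem \ref{thm:RT-QFA-restart-simulated-by-RT-KWQFA-restart} is only needed if one wants the conclusion restated for the KWQFA-based restart model.

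The main obstacle, such as it is, is not algebraic but definitional bookkeeping: one must be careful that the "measure only at the end" normal form for restart automata is genuinely without loss of generality (asserted in the text), and that the strict-positivity promise built into the RT-PostFA definition is exactly what guarantees the geometric series in Lemma \ref{lem:overall-acc-rej} converges to a proper probability distribution, while conversely the bounded-error hypothesis on a restart automaton is exactly what guarantees the postselection denominator is nonzero. I would therefore devote a sentence in each direction to verifying this compatibility of side conditions, and otherwise the proof is a direct transcription of state sets and an appeal to the already-proved Lemma \ref{lem:overall-acc-rej}.
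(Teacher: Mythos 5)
Your proposal is correct and follows essentially the same route as the paper: both directions are the identity on states and transition operators, merely relabelling $Q_a,Q_r$ as $Q_{pa},Q_{pr}$ (and vice versa, with the leftover states becoming restart resp.\ nonpostselection states), and then invoking Lemma \ref{lem:overall-acc-rej} together with Equations \ref{eq:postacc} and \ref{eq:postrej}. Your explicit check that the bounded-error hypothesis forces the postselection denominator to be positive, and conversely that the RT-PostFA positivity promise makes the restart machine halt with probability $1$, is a small but welcome addition that the paper leaves implicit.
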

\begin{proof}
	Given a (probabilistic or quantum) RT-PostFA $\mathcal{P}$, we can construct a corresponding machine with restart $\mathcal{M}$ whose accept and reject states are $\mathcal{P}$'s postselection accept and reject states, respectively. All remaining states of $\mathcal{P}$ are designated as restart states of $\mathcal{M}$.
	Given a machine with restart  $\mathcal{M}$, (we assume the computation is restarted and halted
	only when the input head is at the right end of the tape,) we construct a corresponding RT-PostFA  $\mathcal{P}$ by designating the accept and reject states of $\mathcal{M}$ as  the postselection accept and reject states of $\mathcal{P}$, respectively, and the remaining states of $\mathcal{M}$ are converted to be nonpostselection states. 
	
	By Lemma \ref{lem:overall-acc-rej} and Equations \ref{eq:postacc} and \ref{eq:postrej}, the machines before and after these conversions recognize the same language, with the same error bound.	
\end{proof}

\begin{corollary}
	$ \mathsf{PostQAL} $ and $ \mathsf{PostS} $ are subsets of the classes of the languages recognized
	by two-way QFAs and PFAs, respectively, with bounded error.
\end{corollary}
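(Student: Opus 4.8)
The plan is to chain together the two results that precede this corollary. By Theorem~\ref{thm:posres}, a language in $\mathsf{PostQAL}$ (respectively, $\mathsf{PostS}$) is recognized with bounded error by some RT-QFA$^{\circlearrowleft}$ (respectively, RT-PFA$^{\circlearrowleft}$). So it suffices to show that every language recognized with bounded error by a machine with restart is recognized with bounded error by a two-way machine of the same type (quantum or classical). This is an intuitively clear simulation, and the corollary follows immediately by composing the two implications.

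First I would describe the simulation of a RT-PFA$^{\circlearrowleft}$ (or RT-QFA$^{\circlearrowleft}$) $\mathcal{M}$ by a two-way machine $\mathcal{M}'$ of the same type. The key observation is that ``restarting'' is exactly the operation a two-way head can perform on its own: when $\mathcal{M}$ would enter a restart state, $\mathcal{M}'$ instead sweeps its head back to the square indexed by $1$ and resets its internal state to $q_1$, then resumes. Concretely, $\mathcal{M}'$ uses the states of $\mathcal{M}$ for ``simulation mode'' and adds a small block of auxiliary states that implement a leftward scan triggered whenever a restart state of $\mathcal{M}$ is reached; upon sensing $\cent$ the head returns to simulation mode in state $q_1$. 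Since $\mathcal{M}$ is assumed to restart (and halt) only when the head is at the right end of the tape, the bookkeeping is minimal. In each ``round'' $\mathcal{M}'$ reproduces the exact accept/reject/restart probabilities of one round of $\mathcal{M}$, so the overall acceptance and rejection probabilities of $\mathcal{M}'$ equal those of $\mathcal{M}$ as given by Lemma~\ref{lem:overall-acc-rej}; hence $\mathcal{M}'$ recognizes the same language with the same error bound $\epsilon < \frac{1}{2}$. Combining with Theorem~\ref{thm:posres} gives $\mathsf{PostQAL} \subseteq \{L : L$ is recognized with bounded error by a two-way QFA$\}$ and likewise for $\mathsf{PostS}$ and two-way PFAs.

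The main subtlety to handle carefully is the quantum case: a genuine reset of the state vector to $\ket{q_1}$ is not a unitary operation, so $\mathcal{M}'$ cannot literally ``overwrite'' its current superposition. The standard remedy is that the restart is preceded by a projective measurement (the $\Delta = \{a, r, \circlearrowleft\}$ measurement built into the RT-QFA$^{\circlearrowleft}$ model, or an equivalent measurement in whatever two-way QFA model one targets): conditioned on the outcome ``$\circlearrowleft$'', the surviving sub-normalized branch is discarded and a fresh computation on $\ket{q_1}$ begins, which is a legitimate measurement-plus-initialization step rather than a forbidden unitary. One must also make sure the target two-way QFA model is general enough to permit such measurement-controlled head motion and reinitialization; using the general (admissible-operator) QFA framework of Section~\ref{sec:QFA} lifted to a two-way tape makes this transparent. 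I do not expect any genuine obstacle here — the whole point of the restart construction was to mimic postselection via repeated independent trials, and a two-way head supplies exactly the repetition mechanism — so the proof is essentially a remark that the restart operation is implementable by rewinding the head; the only care needed is to phrase the quantum reinitialization as a measurement outcome rather than as an illegal transformation.
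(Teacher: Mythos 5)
Your proposal is correct and matches the paper's (implicit) argument: the corollary is stated without proof precisely because it follows from Theorem~\ref{thm:posres} together with the observation that a restart is implementable by a two-way head sweeping back to $\cent$ and reinitializing, which is the same point the paper itself makes later when it notes that ``two-way machines are already able to implement restarts.'' Your extra care about the quantum reinitialization being a measurement outcome rather than a unitary is a sensible elaboration but not a departure from the intended route.
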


We are now able to demonstrate that postselection increases the recognition power of both probabilistic and quantum real-time machines. It is well known that finite automata of these types with two-way access to their tape are more powerful than their real-time versions. We do not know if machines with restart equal general two-way automata in power, but we do know that they recognize certain nonregular languages. For a given string $ w $, let $ |w|_{\sigma} $ denote the number of occurrences of symbol $ \sigma $ in $ w $.

\begin{corollary}
	\label{corollary:L-eq}
	$\mathsf{REG} \subsetneq \mathsf{PostS}$.
\end{corollary}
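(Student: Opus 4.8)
The plan is to prove the two inclusions separately. First, $\mathsf{REG} \subseteq \mathsf{PostS}$: any regular language is recognized by a deterministic (hence probabilistic) real-time finite automaton with zero error, and we can convert such a machine into a RT-PostPFA by adding a single postselection accept state and a single postselection reject state, funneling every ordinary accepting computation into the former at the $\dollar$-transition and every rejecting computation into the latter. Since the machine is deterministic, exactly one of $p_{\mathcal{P}}^a(w)$, $p_{\mathcal{P}}^r(w)$ equals $1$ and the other equals $0$, so after normalization the RT-PostPFA recognizes $L$ with zero error. This also shows $\mathsf{REG} \subseteq \mathsf{ZPostS}$.

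For the strictness, I would exhibit a concrete nonregular language in $\mathsf{PostS}$. A natural candidate is $L_{eq} = \{ w \in \{a,b\}^* \mid |w|_a = |w|_b \}$ (or $\{a^n b^n \mid n \ge 0\}$), which the paper's notation $|w|_\sigma$ and the label \texttt{corollary:L-eq} seem to anticipate. By Theorem~\ref{thm:posres}, it suffices to build a RT-PFA$^{\circlearrowleft}$ recognizing $L_{eq}$ with bounded error, or equivalently to build a RT-PostPFA directly. The idea is the standard one-sided random walk / counting trick: in each round, the machine flips coins to guess whether it is dealing with a member or a nonmember, and arranges that on inputs with $|w|_a = |w|_b$ the single-round accept probability $p^a$ dominates the single-round reject probability $p^r$ by a bounded ratio, while on inputs with $|w|_a \ne |w|_b$ the reverse holds. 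Concretely, one can have two parallel branches: a "verify" branch whose surviving (postselection) probability is some fixed $\delta$ exactly when $|w|_a = |w|_b$ and is $0$ otherwise (using a deterministic counter is impossible in constant space, so instead use the probabilistic identity-testing idea: assign $a \mapsto $ one random perturbation and $b \mapsto$ its inverse, so that the product collapses to the identity iff the counts match), and a "default" branch contributing a tiny constant to the opposite decision. Then invoke Lemma~\ref{lem:bounded-error} to check that the ratios $p^r/p^a$ and $p^a/p^r$ stay below $\epsilon/(1-\epsilon)$ on the appropriate inputs.

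The main obstacle is getting a \emph{real-time} probabilistic machine (constant memory, one pass, no restart within a round) to distinguish $|w|_a = |w|_b$ from $|w|_a \ne |w|_b$ with the *right* kind of one-sided behavior that postselection/restart can amplify — an ordinary RT-PFA cannot recognize $L_{eq}$ even with cutpoint in a way that bounded error postselection can fix, so the construction must exploit the normalization in Equations~\ref{eq:postacc}--\ref{eq:postrej} essentially. I expect the cleanest route is: design a RT-PFA$^{\circlearrowleft}$ where each round does a random walk whose position encodes $|w|_a - |w|_b$, restarts unless the walk returns to the origin exactly at the right end-marker, and then accepts; a symmetric gadget handles rejection for the complement. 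Verifying that the conditional (post-restart) accept probability exceeds $1-\epsilon$ on $L_{eq}$ and is below $\epsilon$ off $L_{eq}$ is then a finite computation with the geometric series from Lemma~\ref{lem:overall-acc-rej}. Finally, non-membership of $L_{eq}$ in $\mathsf{REG}$ is immediate by the pumping lemma, completing the proper inclusion.
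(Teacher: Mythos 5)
Your overall strategy matches the paper's: the inclusion $\mathsf{REG}\subseteq\mathsf{PostS}$ is handled exactly as you describe (a DFA becomes a zero-error RT-PostPFA, which is why the paper also gets $\mathsf{REG}=\mathsf{ZPostS}$ later), and the witness for strictness is indeed $L_{eq}$, established by building a bounded-error RT-PFA$^{\circlearrowleft}$ and invoking Theorem~\ref{thm:posres}. The paper does not construct this machine itself; it simply cites \cite{YS10B} for the existence of such a RT-PFA$^{\circlearrowleft}$. So up to the point where you promise a construction, you are on the paper's route.

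The gap is in the construction you actually sketch. A real-time PFA cannot ``do a random walk whose position encodes $|w|_a-|w|_b$ and restart unless the walk returns to the origin'': tracking the position of such a walk is precisely maintaining an unbounded counter, which a finite-state machine cannot do, and there is no measurement-free way to condition on ``returned to the origin.'' Likewise, the ``assign $a$ a random perturbation and $b$ its inverse so the product collapses to the identity iff the counts match'' gadget does not transfer to the probabilistic setting: stochastic matrices generally have no stochastic inverses, so this identity-testing idea (which is natural for unitaries) has no RT-PFA implementation. The construction that actually works, and the one underlying \cite{YS10B}, is Freivalds' coin-flipping trick: in each round generate, via independent fair coin flips on each input letter, an ``accept'' event of probability proportional to $2^{-k|w|_a}\,2^{-k|w|_b}$ and a ``reject'' event of probability proportional to $\tfrac12\bigl(4^{-k|w|_a}+4^{-k|w|_b}\bigr)$, routing all other branches to restart. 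By the AM--GM inequality the ratio $p^a/p^r$ equals a fixed constant when $|w|_a=|w|_b$ and is at most $2/(2^k+2^{-k})$ times that constant otherwise, so choosing $k$ large and applying Lemma~\ref{lem:bounded-error} gives any desired error bound. Without replacing your gadget by something of this form, the claim $L_{eq}\in\mathsf{PostS}$ is not established by your argument.
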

\begin{proof}
	The nonregular language $ L_{eq} = \{ w \in \{a,b\}^{*} \mid |w|_{a} = |w|_{b} \} $ can be recognized by 
	a RT-PFA$ ^{\circlearrowleft} $ \cite{YS10B}. 
\end{proof}

We also show that quantum postselection machines outperform their classical counterparts:

\begin{corollary}
	\label{corollary:L-pal}
		$ \mathsf{PostS} \subsetneq \mathsf{PostQAL} $.
\end{corollary}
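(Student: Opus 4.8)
The plan is to establish both the inclusion $\mathsf{PostS} \subseteq \mathsf{PostQAL}$ and its strictness by exhibiting a witness language. The inclusion is the easy half: by Theorem~\ref{thm:posres}, $\mathsf{PostS}$ and $\mathsf{PostQAL}$ coincide with the classes recognized with bounded error by RT-PFA$^{\circlearrowleft}$s and RT-QFA$^{\circlearrowleft}$s, respectively, and any RT-PFA$^{\circlearrowleft}$ can be simulated by a RT-QFA$^{\circlearrowleft}$ using the standard fact that a sufficiently general quantum model subsumes its probabilistic counterpart (the transition stochastic matrices $A_\sigma$ are realized as admissible operators via a diagonal-plus-permutation decomposition, with the restart/accept/reject state sets carried over unchanged). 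Alternatively one argues directly at the postselection level: a RT-PostPFA is a special case of a RT-PostQFA with diagonal density matrices, and the quantities $p^a, p^r$ and hence $f^a, f^r$ are preserved, so membership and error bounds transfer verbatim. Either route gives $\mathsf{PostS}\subseteq\mathsf{PostQAL}$ with essentially no work.

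The substantive part is separation. I would pick a language known to lie in $\mathsf{PostQAL}$ but provably not in $\mathsf{PostS}$; the palindrome-type language over a binary alphabet (suggested by the corollary label \texttt{corollary:L-pal}) or a word-equality language such as $L_{pal}=\{w \in \{a,b\}^* \mid w = w^R\}$ is the natural candidate. The positive direction is handled by appealing to the restart model: one constructs a RT-KWQFA$^{\circlearrowleft}$ (hence, via the equivalence noted before Theorem~\ref{thm:RT-QFA-restart-simulated-by-RT-KWQFA-restart}, a RT-QFA$^{\circlearrowleft}$, hence by Theorem~\ref{thm:posres} a RT-PostQFA) that recognizes $L_{pal}$ with bounded error, typically by a Kondacs--Watrous--style construction that compares the input against its reversal using interference, restarting on the ``inconclusive'' outcomes so that only the genuinely accepting/rejecting branches survive into the postselection normalization. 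The negative direction is where the real obstacle lies: I must show $L_{pal}\notin\mathsf{PostS}$, i.e., no RT-PFA$^{\circlearrowleft}$ recognizes it with bounded error. This is a lower-bound argument, and the clean way to get it is to observe that $\mathsf{PostS}$ is contained in a well-understood classical class — by Lemma~\ref{lem:overall-acc-rej} and Lemma~\ref{lem:bounded-error}, the condition for bounded-error recognition by an RT-PFA$^{\circlearrowleft}$ is a ratio condition on $p^a(w)$ and $p^r(w)$, and these single-round probabilities are themselves linear combinations of entries of $A_{\tilde w}v_0$, i.e., affine functions computable by an ordinary RT-PFA; this forces $\mathsf{PostS}\subseteq\mathsf{S}\cap\mathsf{coS}$ (or at least into $\mathsf{uS}$), after which one invokes the classical fact that $L_{pal}$ is not stochastic (a standard pumping/Myhill--Nerode-flavoured argument on stochastic languages, or the known non-membership of the palindrome language in $\mathsf{S}$).

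The step I expect to be the main obstacle is pinning down exactly which classical class contains $\mathsf{PostS}$ and verifying that the chosen witness provably escapes it. If $\mathsf{PostS}\subseteq\mathsf{uS}$ is the best available upper bound, then the witness must be a non-unbounded-error stochastic language, which is a stronger requirement; the binary palindrome language is plausible here but the non-membership proof is more delicate and may need an explicit combinatorial argument rather than a citation. The alternative is to select the witness so that its membership in $\mathsf{PostQAL}$ but not $\mathsf{PostS}$ follows from already-published separations between quantum and classical restart automata in~\cite{YS10B}, in which case this corollary reduces almost entirely to Theorem~\ref{thm:posres} plus a citation, and the ``hard part'' is merely checking that the cited separation is stated for the two-way-restart (equivalently postselection) model and not only for plain RT-QFAs. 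I would first check the literature for exactly that statement before committing to a from-scratch lower bound.
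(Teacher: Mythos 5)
Your choice of witness ($L_{pal}$) and your treatment of the positive direction (a RT-QFA$^{\circlearrowleft}$ for $L_{pal}$ from \cite{YS10B}, transferred to a RT-PostQFA via Theorem~\ref{thm:posres}) match the paper exactly, and the inclusion $\mathsf{PostS}\subseteq\mathsf{PostQAL}$ is indeed trivial. The problem is your lower bound. Your primary route --- place $\mathsf{PostS}$ inside $\mathsf{S}\cap\mathsf{coS}$ or $\mathsf{uS}$ and then invoke ``the known non-membership of the palindrome language in $\mathsf{S}$'' --- cannot work, because $L_{pal}$ \emph{is} stochastic. Indeed, the paper's own Theorem~\ref{thm:PostQ-subset-Q} gives $\mathsf{PostQAL}\subseteq\mathsf{S}$, and since the whole point is that $L_{pal}\in\mathsf{PostQAL}$, it follows that $L_{pal}\in\mathsf{S}$ (this is also classically known). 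So no containment of $\mathsf{PostS}$ in an unbounded-error class can separate it from $L_{pal}$; if that were your only upper bound on $\mathsf{PostS}$, the witness would have to be a non-stochastic language, and then its membership in $\mathsf{PostQAL}$ would be blocked by the same Theorem~\ref{thm:PostQ-subset-Q}.

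The upper bound that actually works is a \emph{bounded-error} one: by Theorem~\ref{thm:posres}, a RT-PostPFA is equivalent to a RT-PFA$^{\circlearrowleft}$, and a restart machine is in particular a two-way probabilistic finite automaton, so $\mathsf{PostS}$ is contained in the class of languages recognized by 2PFAs with bounded error (this is the corollary stated immediately after Theorem~\ref{thm:posres}). Dwork and Stockmeyer \cite{DS92} proved that $L_{pal}$ cannot be recognized with bounded error by any two-way PFA, which finishes the separation with no from-scratch lower bound needed. Your hedged alternative (``reduce to Theorem~\ref{thm:posres} plus a citation'') is the right instinct, but you left the decisive step --- identifying bounded-error 2PFAs as the containing class and \cite{DS92} as the impossibility result --- as something you ``would first check,'' so the proof as proposed is not complete and its stated fallback is unsound.
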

\begin{proof}
	$ L_{pal} = \{w \in \{a,b\}^{*} \mid w = w ^{r} \} $ is in $ \mathsf{PostQAL}$, since there exists 
	a RT-QFA$ ^{\circlearrowleft} $ algorithm for recognizing it \cite{YS10B}.
	However,  $ L_{pal} $ cannot be recognized with bounded error even by two-way PFAs \cite{DS92}.
\end{proof}

The recognition error of a given real-time machine with postselection can be
reduced to any desired positive value by performing a tensor product of the machine with itself, essentially running as many parallel copies of it as required.
Specifically, if we combine $ k $ copies of a machine with postselection state set
$ Q_{pa} \cup Q_{pr} $, the new postselection accept and reject 
state sets can be chosen as
\begin{equation}
       Q_{pa}^{\prime} = \underbrace{Q_{pa} \times \cdots \times Q_{pa}}_{k
\mbox{ times}}
\end{equation}
and
\begin{equation}
       Q_{pr}^{\prime} = \underbrace{Q_{pr} \times \cdots \times Q_{pr}}_{k
\mbox{ times}},
\end{equation}
respectively. Note that the postselection feature enables this technique to be simpler than the usual ``majority vote" approach for probability amplification. This is easy to see for probabilistic machines. See Appendix \ref{app:probamp} for a proof for the quantum version.

\begin{theorem}
	\label{thm:post-closure}
	$ \mathsf{PostQAL} $ and $ \mathsf{PostS} $ are closed under complementation, union, and intersection.
\end{theorem}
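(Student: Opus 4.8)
The plan is to prove closure under each of the three operations separately, exploiting the fact (Theorem \ref{thm:posres}) that $\mathsf{PostS}$ and $\mathsf{PostQAL}$ coincide with the bounded-error classes of the corresponding machines with restart, but working directly with the postselection model, since the normalization in Equations \ref{eq:postacc} and \ref{eq:postrej} is exactly what makes the constructions clean. Throughout I would first establish that error can be made as small as desired (the tensor-power amplification described just before the statement), so that when combining two machines I may assume each has error bound $\epsilon$ for an $\epsilon$ chosen small enough to absorb the loss incurred by the combination.

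Complementation is immediate: given a RT-PostPFA (or RT-PostQFA) $\mathcal{M}=(Q,\Sigma,\{A_{\sigma}\},q_1,Q_p)$ with $Q_p=Q_{pa}\cup Q_{pr}$, swap the roles of the postselection accept and reject state sets, i.e.\ take $Q_{pa}'=Q_{pr}$ and $Q_{pr}'=Q_{pa}$. Since $p^a$ and $p^r$ are interchanged while $p^a+p^r$ is unchanged, and the postselection nonemptiness condition $\sum_{q_i\in Q_p}v_{|\tilde w|}[i]>0$ still holds, the new machine has $f^a$ and $f^r$ swapped, recognizing $\overline{L}$ with the same error bound; this works verbatim in the quantum case with $P_{pa},P_{pr}$ exchanged.

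For intersection, let $L_1$ be recognized by $\mathcal{M}_1$ and $L_2$ by $\mathcal{M}_2$, both with small error bound $\epsilon$, and form the tensor product $\mathcal{M}_1\otimes\mathcal{M}_2$ running the two machines in parallel on the same input. The natural choice is to postselect-accept on $Q_{pa}^{(1)}\times Q_{pa}^{(2)}$ and postselect-reject on the rest of $Q_p^{(1)}\times Q_p^{(2)}$, namely $Q_{pr}'=(Q_{pa}^{(1)}\times Q_{pr}^{(2)})\cup(Q_{pr}^{(1)}\times Q_{pa}^{(2)})\cup(Q_{pr}^{(1)}\times Q_{pr}^{(2)})$, and to restrict (restart/discard) everything outside $Q_p^{(1)}\times Q_p^{(2)}$. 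Because the before-postselection accept/reject probabilities of the product on state-pairs factor as products of the individual ones, for $w\in L_1\cap L_2$ the accept weight is $\ge(1-\epsilon)^2$ of the surviving mass while for $w\notin L_1\cap L_2$ at least one coordinate rejects with weight $\ge(1-\epsilon)$, and a short computation bounds the post-normalization error by something like $2\epsilon/(1-\epsilon)$ or similar; choosing $\epsilon$ accordingly (via the amplification) gives any target bounded error. Union then follows from complementation and intersection by De Morgan's law, $L_1\cup L_2=\overline{\overline{L_1}\cap\overline{L_2}}$, so no separate construction is needed.

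The main obstacle I anticipate is the intersection bookkeeping: the post-normalization map is nonlinear, so one must carefully verify that the ratio $p^r/p^a$ (or its reciprocal) stays below the required threshold in \emph{all} cases, in particular when $w$ fails both $L_1$ and $L_2$ versus failing exactly one, and one must confirm the postselection nonemptiness condition for the product machine is inherited from the factors (it is, since $p_{\mathcal{M}_1}^a+p_{\mathcal{M}_1}^r>0$ and likewise for $\mathcal{M}_2$ forces the product's surviving mass to be positive). In the quantum case the density-matrix version of the argument needs the tensor operator $\mathcal{E}^{(1)}_\sigma\otimes\mathcal{E}^{(2)}_\sigma$ to be admissible, which is routine, and the factorization $tr((P_{pa}^{(1)}\otimes P_{pa}^{(2)})(\rho_1\otimes\rho_2))=tr(P_{pa}^{(1)}\rho_1)\,tr(P_{pa}^{(2)}\rho_2)$ makes the rest identical to the classical estimate. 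I would keep the write-up short by doing the classical intersection estimate in detail and remarking that complementation is trivial, union is De Morgan, and the quantum cases are the same computation with traces replacing vector sums.
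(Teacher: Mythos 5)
Your proposal is correct and follows essentially the same route as the paper: complementation by swapping the postselection accept and reject state sets, and intersection via the tensor product with the new accept set $Q_{pa}^{(1)}\times Q_{pa}^{(2)}$, using the factorization of the before-postselection probabilities and the normalization $f^{a}=f^{a}_{1}f^{a}_{2}$. The only (cosmetic) difference is that you obtain union from intersection and complementation by De Morgan, whereas the paper gives a direct dual tensor construction (accepting on $Q_{p_{1}}\otimes Q_{p_{2}}\setminus Q_{pr_{1}}\otimes Q_{pr_{2}}$), which amounts to the same thing.
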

\begin{proof}
	For any language recognized by a RT-PostFA with bounded error, we can obtain a new RT-PostFA
	recognizing the complement of that language with bounded error, by just swapping the designations of the postselection accept and reject states.
	Therefore, both classes are closed under complementation.
	
	Let $ L_{1} $ and $ L_{2} $ be members of $ \mathsf{PostQAL} $ (resp., $ \mathsf{PostS} $).
	Then, there exist  two RT-PostQFAs (resp., RT-PostPFAs) $ \mathcal{P}_{1} $ and $ \mathcal{P}_{2} $	
	recognizing $ L_{1} $ and $ L_{2} $ with error bound $ \epsilon \leq \frac{1}{4} $, respectively.
	Moreover, let $ Q_{pa_{1}} $ and $ Q_{pr_{1}} $ (resp., $ Q_{pa_{2}} $ and $ Q_{pr_{2}} $)
	represent the sets of postselection accept and reject states of $ \mathcal{P}_{1} $
	(resp., $ \mathcal{P}_{2} $), respectively, and let $ Q_{p_{1}} =  Q_{pa_{1}} \cup Q_{pr_{1}} $
	and $ Q_{p_{2}} =  Q_{pa_{2}} \cup Q_{pr_{2}} $.
	By taking the tensor products of  $ \mathcal{P}_{1} $ and $ \mathcal{P}_{2} $, we obtain two new machines, 
	say $ \mathcal{M}_{1} $ and $ \mathcal{M}_{2} $, and set their definitions so that
	\begin{itemize}
		\item the sets of the postselection accept and reject  states of $ \mathcal{M}_{1} $ are
			\begin{equation}
				Q_{p_{1}} \otimes Q_{p_{2}} \setminus Q_{pr_{1}} \otimes Q_{pr_{2}}
			\end{equation} 
			and
			\begin{equation}
				Q_{pr_{1}} \otimes Q_{pr_{2}},
			\end{equation}
			respectively, and
		\item the sets of the postselection accept and reject  states of $ \mathcal{M}_{2} $ are
			\begin{equation}
				Q_{pa_{1}} \otimes Q_{pa_{2}},
			\end{equation} 
			and
			\begin{equation}
				Q_{p_{1}} \otimes Q_{p_{2}} \setminus Q_{pa_{1}} \otimes Q_{pa_{2}},
			\end{equation}
			respectively.
	\end{itemize}
	Thus, the following inequalities can be verified for a given input string $ w \in \Sigma^{*} $:
	\begin{itemize}
		\item if $ w \in L_{1} \cup L_{2} $, $ f_{\mathcal{M}_{1}}^{a}(w) \ge \frac{15}{16} $;
		\item if $ w \notin L_{1} \cup L_{2} $, $ f_{\mathcal{M}_{1}}^{a}(w) \leq \frac{7}{16} $;
		\item if $ w \in L_{1} \cap L_{2} $, $ f_{\mathcal{M}_{2}}^{a}(w) \ge \frac{9}{16} $; 
		\item if $ w \notin L_{1} \cap L_{2} $, $ f_{\mathcal{M}_{2}}^{a}(w) \leq \frac{1}{16} $.
	\end{itemize}
	We conclude that both classes are closed under union and intersection.
\end{proof}

\begin{theorem}
	\label{thm:PostQ-subset-Q}
	$ \mathsf{PostQAL} $ and $ \mathsf{PostS} $ are subsets of $ \mathsf{S} $ $ ( \mathsf{QAL} ) $.
\end{theorem}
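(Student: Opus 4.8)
The plan is to convert the postselection acceptance criterion into an ordinary (strict) cutpoint criterion. Let $L$ be recognized by a (probabilistic or quantum) RT-PostFA $\mathcal{M}$ with error bound $\epsilon<\frac12$, and write $p^a_{\mathcal{M}}(w)$ and $p^r_{\mathcal{M}}(w)$ for its pre-postselection acceptance and rejection probabilities. The first step is the elementary remark that, because $\epsilon<\frac12$ and $p^a_{\mathcal{M}}(w)+p^r_{\mathcal{M}}(w)>0$ for every input, the bounded-error condition forces $f^a_{\mathcal{M}}(w)=\frac{p^a_{\mathcal{M}}(w)}{p^a_{\mathcal{M}}(w)+p^r_{\mathcal{M}}(w)}>\frac12$ when $w\in L$ and $f^a_{\mathcal{M}}(w)<\frac12$ when $w\notin L$; equivalently,
\[
 w\in L \iff p^a_{\mathcal{M}}(w)>p^r_{\mathcal{M}}(w),
\]
and the equality $p^a_{\mathcal{M}}(w)=p^r_{\mathcal{M}}(w)$ holds for no $w$.

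The second step is to build an ordinary RT-PFA (resp.\ RT-QFA) $\mathcal{M}'$ whose acceptance probability is an affine function of $p^a_{\mathcal{M}}(w)$ and $p^r_{\mathcal{M}}(w)$ that crosses $\frac12$ exactly when $p^a_{\mathcal{M}}(w)-p^r_{\mathcal{M}}(w)$ crosses $0$. Concretely, $\mathcal{M}'$ adjoins one extra bit (resp.\ qubit) to the state set and, while reading $\cent$, performs a fair coin flip that splits the computation into two equally weighted branches $A$ and $B$, running a copy of $\mathcal{M}$ in each. In branch $A$ the accepting states are the postselection-accept states $Q_{pa}$ of $\mathcal{M}$; in branch $B$ the accepting states are all states that are \emph{not} postselection-reject states, i.e.\ $Q\setminus Q_{pr}$. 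Since $\mathcal{M}$'s final state vector (resp.\ density operator) is normalized, the probability that branch $B$ ends outside $Q_{pr}$ is $1-p^r_{\mathcal{M}}(w)$, so
\[
 f^a_{\mathcal{M}'}(w)=\tfrac12\,p^a_{\mathcal{M}}(w)+\tfrac12\bigl(1-p^r_{\mathcal{M}}(w)\bigr)=\tfrac12\bigl(1+p^a_{\mathcal{M}}(w)-p^r_{\mathcal{M}}(w)\bigr).
\]
Hence $f^a_{\mathcal{M}'}(w)>\frac12$ if and only if $p^a_{\mathcal{M}}(w)>p^r_{\mathcal{M}}(w)$, i.e.\ if and only if $w\in L$, so $L$ is recognized by $\mathcal{M}'$ with strict cutpoint $\frac12$. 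This places $\mathsf{PostS}$ in $\mathsf{S}$ and $\mathsf{PostQAL}$ in $\mathsf{QAL}$; the remaining inclusions asserted in the theorem then follow from the known identity $\mathsf{S}=\mathsf{QAL}$.

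The steps requiring care — and where I expect whatever (mild) obstacle there is to lie — are checking that this construction is legitimate in each model: for RT-PFAs, that a fair coin flip on $\cent$ followed by two independent copies carrying different accepting-state sets is realizable by stochastic transition matrices; and for RT-QFAs, that the general (admissible-operator) model is closed under adjoining an ancilla and taking a classical mixture of operations — the same kind of closure already exploited for the tensor-product probability-amplification argument and for the RT-KWQFA$^{\circlearrowleft}$-to-RT-QFA$^{\circlearrowleft}$ simulation earlier in the paper. One should also keep the normalization of $\mathcal{M}$'s final state in view, so that the probability of branch $B$ ending outside $Q_{pr}$ really is $1-p^r_{\mathcal{M}}(w)$; beyond that the argument is routine.
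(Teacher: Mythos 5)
Your proposal is correct and follows essentially the same route as the paper: both constructions produce an ordinary machine whose acceptance probability is $\tfrac12\bigl(1+p^{a}_{\mathcal{M}}(w)-p^{r}_{\mathcal{M}}(w)\bigr)$, which exceeds the strict cutpoint $\tfrac12$ exactly when $w\in L$. The only (immaterial) difference is in the implementation: the paper keeps a single copy of $\mathcal{M}$ and routes the nonpostselection probability mass to accept states with probability $\tfrac12$ at the end, whereas you flip a fair coin on $\cent$ and run two copies with complementary accepting-state designations.
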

\begin{proof}
	A given RT-PostFA can be converted to its counterpart in the corresponding standard model (without
	 postselection) as follows:
	\begin{itemize}
		\item All nonpostselection states of the RT-PostFA are made to transition to accept states
			with probability $ \frac{1}{2} $ at the end of the computation.
		\item All members of $ Q_{pa} $ are accept states in the new machine.
	\end{itemize}
	Therefore, strings which are members of the original machine's language are accepted with probability 
	exceeding $ \frac{1}{2} $ by the new machine. For other strings, the acceptance probability can be at most $ \frac{1}{2} $.
\end{proof}

By using the fact that $ \mathsf{S} $ is not closed under union and intersection
\cite{Fl72,Fl74,La74}, Corollary \ref{corollary:L-pal}, and Theorems
\ref{thm:post-closure} and \ref{thm:PostQ-subset-Q},
we obtain the following corollary.

\begin{corollary}
       $ \mathsf{PostS} $ $ \subsetneq $ $ \mathsf{PostQAL} $ $ \subsetneq $ $ \mathsf{S} $ $ ( \mathsf{QAL} ) $.
\end{corollary}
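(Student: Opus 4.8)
The plan is to assemble the statement entirely from results already in hand. There are two strict inclusions to establish: $\mathsf{PostS} \subsetneq \mathsf{PostQAL}$ and $\mathsf{PostQAL} \subsetneq \mathsf{S}$ (equivalently $\mathsf{PostQAL} \subsetneq \mathsf{QAL}$, since the excerpt records $\mathsf{S} = \mathsf{QAL}$).

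First I would dispose of $\mathsf{PostS} \subsetneq \mathsf{PostQAL}$. The inclusion $\mathsf{PostS} \subseteq \mathsf{PostQAL}$ holds because the general RT-QFA model underlying the RT-PostQFA is strong enough to subsume RT-PFAs: given a RT-PostPFA, one builds a RT-PostQFA with the same state set, the same postselection accept/reject partition, and operators simulating the stochastic matrices, so that $p^{a}$, $p^{r}$ before postselection, and hence $f^{a}$, $f^{r}$ after normalization, are unchanged. Strictness is exactly Corollary \ref{corollary:L-pal}, which exhibits $L_{pal}$ in $\mathsf{PostQAL} \setminus \mathsf{PostS}$. Nothing new is required here.

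Next, for $\mathsf{PostQAL} \subsetneq \mathsf{S}$: the inclusion $\mathsf{PostQAL} \subseteq \mathsf{S}$ (and likewise $\mathsf{PostQAL} \subseteq \mathsf{QAL}$) is Theorem \ref{thm:PostQ-subset-Q}. To obtain strictness I would argue by contradiction. Suppose $\mathsf{PostQAL} = \mathsf{S}$. Then by Theorem \ref{thm:post-closure}, $\mathsf{PostQAL}$, and therefore $\mathsf{S}$, would be closed under union and intersection; but it is classical that $\mathsf{S}$ is closed under neither operation \cite{Fl72,Fl74,La74}, a contradiction. Hence $\mathsf{PostQAL} \neq \mathsf{S}$, and combined with the inclusion this yields $\mathsf{PostQAL} \subsetneq \mathsf{S}$. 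The identical argument applies to $\mathsf{QAL}$ because $\mathsf{S} = \mathsf{QAL}$. (More constructively: if $L_{1}, L_{2} \in \mathsf{S}$ witness the failure of closure of $\mathsf{S}$ under union, then not both of them can lie in $\mathsf{PostQAL}$, since closure of $\mathsf{PostQAL}$ under union together with $\mathsf{PostQAL} \subseteq \mathsf{S}$ would otherwise force $L_{1} \cup L_{2} \in \mathsf{S}$; thus at least one of $L_{1}, L_{2}$ belongs to $\mathsf{S} \setminus \mathsf{PostQAL}$.)

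I do not expect a genuine obstacle: every ingredient — the quantum-simulates-probabilistic inclusion, the separating language $L_{pal}$ from Corollary \ref{corollary:L-pal}, the closure of $\mathsf{PostQAL}$ under union and intersection from Theorem \ref{thm:post-closure}, the containment $\mathsf{PostQAL} \subseteq \mathsf{S}$ from Theorem \ref{thm:PostQ-subset-Q}, and the classical non-closure of $\mathsf{S}$ — is already available, so the proof is short bookkeeping. The only points deserving a sentence of care are that Theorem \ref{thm:PostQ-subset-Q} also places $\mathsf{PostS}$ inside $\mathsf{S}$ (so the whole chain lives within $\mathsf{S}$), and that the non-closure cited for $\mathsf{S}$ is for precisely the operations under which $\mathsf{PostQAL}$ is closed, which is what makes the contradiction go through.
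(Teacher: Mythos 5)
Your proposal is correct and follows essentially the same route as the paper, which derives the corollary from exactly the ingredients you cite: Corollary \ref{corollary:L-pal} for the first strict inclusion, Theorem \ref{thm:PostQ-subset-Q} for the containment $\mathsf{PostQAL} \subseteq \mathsf{S}$, and the combination of Theorem \ref{thm:post-closure} with the non-closure of $\mathsf{S}$ under union and intersection for strictness of the second inclusion. The paper even makes your ``constructive'' remark explicit immediately after the corollary, noting that for the languages $L_{1}$ and $L_{2}$ witnessing non-closure of $\mathsf{S}$ under union, at least one must lie outside $\mathsf{PostQAL}$.
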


For instance, for any triple of integers $u$, $v$, $w$, where $0<u<v<w$, the languages $L_{1}= \{a^{m}b^{k}c^{n} | m^{u}>k^{v}>0\}$ and $L_{2}= \{a^{m}b^{k}c^{n} | k^{v}>n^{w}>0\}$ are in $\mathsf{S}$, 
whereas $L_{1} \cup L_{2}$ is not \cite{Tu82}. It must therefore be the case that at least one of $L_{1}$ and $L_{2}$ is not in $\mathsf{PostQAL}$.

Let us examine the extreme case where we wish our machines to make no error at all. Consider a  RT-PostPFA (or RT-PostQFA) $\mathcal{M}$ that recognizes a language $L$ with zero error. It is easy to see that we can convert $\mathcal{M}$ to a standard RT-PFA (or RT-QFA) $\mathcal{M'}$ that recognizes $L$ with cutpoint zero by just designating the postselection accept states of $\mathcal{M}$ as the accept states of $\mathcal{M'}$. We can build another RT-PFA (or RT-QFA) $\mathcal{M''}$ that recognizes the complement of $L$ with cutpoint zero by designating only the postselection reject states of $\mathcal{M}$ as the accept states of $\mathcal{M''}$. We therefore have the following:

\begin{corollary}
	$ \mathsf{REG} $=$ \mathsf{ZPostS} $ $ \subseteq $ $ \mathsf{ZPostQAL} $ $ \subseteq $ 
	$ \mathsf{NQAL} $ $ \cap $ $ \mathsf{coNQAL} $.
\end{corollary}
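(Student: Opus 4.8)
The plan is to split the chain into its four constituent containments, each of which I would dispatch by a simple state-relabeling argument, building on the conversions already sketched in the paragraph before the corollary.

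First, for $\mathsf{REG} \subseteq \mathsf{ZPostS}$ I would start from a deterministic finite automaton for a regular language $L$, read it as a RT-PFA, and adjoin one postselection-accept state $q_{pa}$ and one postselection-reject state $q_{pr}$; the $\dollar$-transition sends every DFA-accepting state to $q_{pa}$ and every non-accepting state to $q_{pr}$, each with probability $1$. Then for every $w$ exactly one of $p_{\mathcal{P}}^{a}(w), p_{\mathcal{P}}^{r}(w)$ is $1$ and the other is $0$, so the validity condition $\sum_{q_i \in Q_p} v_{|\tilde{w}|}[i] > 0$ holds and the machine recognizes $L$ with zero error.

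Next I would handle $\mathsf{ZPostQAL} \subseteq \mathsf{NQAL} \cap \mathsf{coNQAL}$ (and, reusing the same idea in the classical setting, $\mathsf{ZPostS} \subseteq \mathsf{REG}$). Let $\mathcal{M}$ be a zero-error RT-PostQFA (resp. RT-PostPFA) for $L$, with postselection state sets $Q_{pa}, Q_{pr}$. Zero error means $f_{\mathcal{M}}^{a}(w) = 1$ for $w \in L$ and $f_{\mathcal{M}}^{r}(w) = 1$ for $w \notin L$; together with the validity requirement $p_{\mathcal{M}}^{a}(w) + p_{\mathcal{M}}^{r}(w) = tr(P_p \rho_{|\tilde{w}|}) > 0$ and Equations \ref{eq:postacc}--\ref{eq:postrej} this forces $p_{\mathcal{M}}^{a}(w) > 0 = p_{\mathcal{M}}^{r}(w)$ when $w \in L$ and $p_{\mathcal{M}}^{r}(w) > 0 = p_{\mathcal{M}}^{a}(w)$ when $w \notin L$. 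I would then define $\mathcal{M}'$ to be the standard RT-QFA (resp. RT-PFA) identical to $\mathcal{M}$ but with accepting-state set $Q_{pa}$, so that $f_{\mathcal{M}'}^{a}(w) = p_{\mathcal{M}}^{a}(w)$ and hence $f_{\mathcal{M}'}^{a}(w) > 0$ exactly when $w \in L$; this is recognition of $L$ with strict cutpoint $0$, i.e. $L \in \mathsf{NQAL}$. Symmetrically, taking $Q_{pr}$ as the accepting-state set yields a machine recognizing $\overline{L}$ with strict cutpoint $0$, so $L \in \mathsf{coNQAL}$. In the probabilistic version the same two machines place $L$ and $\overline{L}$ in the class of languages recognized by RT-PFAs with cutpoint zero, which is exactly $\mathsf{REG}$ \cite{Bu67}; combined with the first step this gives $\mathsf{REG} = \mathsf{ZPostS}$.

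Finally, $\mathsf{ZPostS} \subseteq \mathsf{ZPostQAL}$ is immediate, since a RT-PostPFA is a special case of a RT-PostQFA (diagonal density operators simulate stochastic vectors exactly, as noted in Section \ref{sec:QFA}), and exact simulation preserves zero error. The arguments are all routine relabelings, and the one spot I would treat as the (mild) crux is the inference from $f_{\mathcal{M}}^{a}(w) = 1$ to $p_{\mathcal{M}}^{a}(w) > 0$: this is precisely where the postselection validity condition $tr(P_p \rho_{|\tilde{w}|}) > 0$ is indispensable, since without it $p_{\mathcal{M}}^{a}$ and $p_{\mathcal{M}}^{r}$ could both vanish, leaving $f_{\mathcal{M}}^{a}(w)$ undefined and breaking the conversion to a strict-cutpoint automaton. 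I would also flag that $\mathsf{NQAL}$ is defined through strict cutpoint zero, which is exactly what $\mathcal{M}'$ and $\mathcal{M}''$ provide.
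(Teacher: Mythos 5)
Your proposal is correct and follows essentially the same route as the paper, whose proof is the paragraph preceding the corollary: designate the postselection accept (resp.\ reject) states of the zero-error machine as accepting states of a standard automaton to get cutpoint-zero recognition of $L$ (resp.\ $\overline{L}$), combined with the trivial containments $\mathsf{REG} \subseteq \mathsf{ZPostS} \subseteq \mathsf{ZPostQAL}$. Your explicit appeal to the validity condition $tr(P_p \rho_{|\tilde{w}|}) > 0$ to conclude $p_{\mathcal{M}}^{a}(w) > 0$ for $w \in L$ is exactly the point the paper leaves implicit.
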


(Note that it is still open \cite{YS10A} whether $ \mathsf{NQAL} $ $ \cap $ $ \mathsf{coNQAL} $ 
contains a nonregular language or not.)
We can conclude, using Corollaries \ref{corollary:L-eq} and \ref{corollary:L-pal}, and the fact that
$ L_{pal} $ is not in $ \mathsf{NQAL} $ $ \cap $ $ \mathsf{coNQAL} $ \cite{YS10A}, 
that allowing postselection machines to commit a small but nonzero amount of error increases their recognition power:

\begin{corollary}
	$ \mathsf{ZPostS} $ $ \subsetneq $ $ \mathsf{PostS} $, and $ \mathsf{ZPostQAL} $ $ \subsetneq $ $ \mathsf{PostQAL} $.
\end{corollary}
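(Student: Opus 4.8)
The plan is to produce the separations by exhibiting explicit witness languages, reusing $L_{eq}$ and $L_{pal}$ from the discussion above rather than building any new automata. The inclusions $\mathsf{ZPostS} \subseteq \mathsf{PostS}$ and $\mathsf{ZPostQAL} \subseteq \mathsf{PostQAL}$ are immediate, since zero-error recognition is by definition a special case of bounded-error recognition with $\epsilon = 0$; so the entire content of the corollary is the \emph{properness} of these two inclusions, and I would treat the classical and quantum cases in turn.

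For the classical case I would combine the identity $\mathsf{ZPostS} = \mathsf{REG}$ from the preceding corollary with Corollary \ref{corollary:L-eq}, which places the nonregular language $L_{eq} = \{ w \in \{a,b\}^{*} \mid |w|_{a} = |w|_{b} \}$ in $\mathsf{PostS}$. Since $L_{eq} \notin \mathsf{REG} = \mathsf{ZPostS}$ but $L_{eq} \in \mathsf{PostS}$, we get $\mathsf{ZPostS} \subsetneq \mathsf{PostS}$ directly, with no further argument.

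For the quantum case I would use the chain $\mathsf{ZPostQAL} \subseteq \mathsf{NQAL} \cap \mathsf{coNQAL}$ (again from the preceding corollary) as an upper bound, and contrast it with the lower bound $L_{pal} \in \mathsf{PostQAL}$ supplied by Corollary \ref{corollary:L-pal}. Because $L_{pal}$ is known not to lie in $\mathsf{NQAL} \cap \mathsf{coNQAL}$ \cite{YS10A}, it cannot be a member of $\mathsf{ZPostQAL}$; hence $L_{pal} \in \mathsf{PostQAL} \setminus \mathsf{ZPostQAL}$, and the inclusion $\mathsf{ZPostQAL} \subseteq \mathsf{PostQAL}$ is proper as well.

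The proof therefore needs no new constructions or calculations; the only thing that requires attention — the closest thing to an obstacle — is making sure the imported facts are aligned correctly: that $\mathsf{ZPostS}$ really coincides with $\mathsf{REG}$ (so a nonregular witness suffices on the classical side), that $L_{eq}$ is genuinely nonregular and $L_{pal}$ is genuinely outside $\mathsf{NQAL}\cap\mathsf{coNQAL}$, and that the containment $\mathsf{ZPostQAL}\subseteq \mathsf{NQAL}\cap\mathsf{coNQAL}$ is invoked in the direction used here (as an upper bound that the witness violates). Once these are checked, the corollary is a one-line bookkeeping consequence of results already in hand.
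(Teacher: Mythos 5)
Your proof is correct and follows exactly the paper's own argument: the paper likewise derives the corollary from $\mathsf{REG}=\mathsf{ZPostS}$ together with $L_{eq}\in\mathsf{PostS}\setminus\mathsf{REG}$ (Corollary \ref{corollary:L-eq}) for the classical separation, and from $\mathsf{ZPostQAL}\subseteq\mathsf{NQAL}\cap\mathsf{coNQAL}$ together with $L_{pal}\in\mathsf{PostQAL}$ but $L_{pal}\notin\mathsf{NQAL}\cap\mathsf{coNQAL}$ for the quantum separation. No differences worth noting.
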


\section{Latvian PostFAs } \label{section:LPostFA}

The first examination of QFAs with postselection was carried out in \cite{LSF09} by L\={a}ce, Scegulnaja-Dubrovska and Freivalds. The main difference between their machines, which we call \textit{Latvian RT-PostFAs}, and abbreviate as RT-LPostFAs, and ours is that the transitions of RT-LPostFAs are not assumed to lead the machine to at least one postselection state with nonzero probability. RT-LPostFAs have the additional unrealistic capability of detecting if the total probability of postselection states is zero at the end of the processing of the input, and accumulating all probability in a single output in such a case.

Although the motivation for this feature is not explained in
\cite{LSF09,SLF10}, such an approach may be seen as an attempt to compensate for some fundamental weaknesses of finite automata.
In many computational models with bigger space bounds, 
one can modify a machine employing the Latvian approach without changing the recognized language 
so that the postselection state set will have nonzero probability for any input string. 
This is achieved by just creating some
computational paths that end up in the postselection set with
sufficiently small probabilities so that
their inclusion does not change the acceptance probabilities of strings that lead the original machine to the postselection set significantly. These paths can be used to accept or to reject
the input as desired whenever there is zero probability of observing the other postselection states.
Unfortunately, we do not know how to implement this construction
in quantum or probabilistic finite automata\footnote{In a similar vein,
we do not know how to modify a given quantum or probabilistic automaton without changing the recognized language so that it is guaranteed to accept all strings with probabilities that are not equal to the cutpoint.
A related open problem is whether $ \mathsf{coS} = \mathsf{S} $ or not \cite{Pa71,YS10A},
even when we restrict ourselves to computable transition probabilities \cite{Di77}.}, so we prefer our model, in which the only nonstandard capability conferred to the machines is postselection, to the Latvian one.

We will consider LPostQFAs\footnote{The original definitions of RT-LPostQFAs in \cite{LSF09} are based on weaker QFA variants, including the KWQFA. Replacing those with the machines of Section \ref{sec:QFA} does not change the model, by an argument that is almost identical to the one presented in Appendix \ref{app:proof-of-QFA-to-KWQFA}. Only quantum machines are defined in \cite{LSF09}; the probabilistic variant is considered for the first time in this paper.} as machines of the type introduced in Section \ref{sec:Posdefs} with an additional component $ \tau \in \{A,R\} $, such that
whenever the postselection probability is zero for a given input
string $ w \in \Sigma^{*} $,
\begin{itemize}
       \item $ w $ is accepted with probability 1 if $ \tau = A $,
       \item $ w $ is rejected with probability 1 if $ \tau = R $.
\end{itemize}
The bounded-error (resp., zero-error) classes corresponding to the RT-LPostPFA and RT-LPostQFA
models are called $ \mathsf{LPostS} $ and $ \mathsf{LPostQAL} $ (resp., $ \mathsf{ZLPostS} $ 
and $ \mathsf{ZLPostQAL} $), respectively.

\begin{theorem}
	$ \mathsf{LPostS} $ = $ \mathsf{PostS} $.
\end{theorem}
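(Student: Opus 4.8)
The plan is to establish the two inclusions separately. The inclusion $\mathsf{PostS} \subseteq \mathsf{LPostS}$ is immediate, because a RT-PostPFA is literally a RT-LPostPFA (with an arbitrary choice of the extra component $\tau$) on which the Latvian ``default'' rule is never triggered, so it recognizes the same language with the same error bound in both readings. For the reverse inclusion $\mathsf{LPostS} \subseteq \mathsf{PostS}$, fix a RT-LPostPFA $\mathcal{P} = (Q,\Sigma,\{A_{\sigma \in \tilde{\Sigma}}\},q_{1},Q_{p},\tau)$ recognizing a language $L$ with error bound $\epsilon < \frac{1}{2}$, and let $Z = \{\, w \in \Sigma^{*} \mid \sum_{q_{i} \in Q_{p}} v_{|\tilde{w}|}[i] = 0 \,\}$ be the set of inputs on which $\mathcal{P}$'s postselection probability vanishes. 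The first key point is that $Z$ is a regular language: since the entries of the $A_{\sigma}$ are nonnegative no cancellation occurs, so the \emph{support} of $\mathcal{P}$'s state vector after reading a prefix of $\tilde{w}$ depends only on the support after the previous step and the symbol just read, and $w \in Z$ holds precisely when $\mathrm{supp}(v_{|\tilde{w}|}) \cap Q_{p} = \emptyset$. This condition is decided by the DFA $D$ whose states are the subsets of $Q$, with start state $\{q_{1}\}$, transition sending $S$ on symbol $\sigma$ to $\{\, q_{j} \mid A_{\sigma}[j,i] > 0 \text{ for some } q_{i} \in S \,\}$, and accepting subsets exactly those disjoint from $Q_{p}$.

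I would then define the RT-PostPFA $\mathcal{P}'$ to run $\mathcal{P}$ and $D$ in parallel: its states are the pairs $(q,S)$ with $q \in Q$, $S \subseteq Q$, together with two fresh states $q_{\mathrm{acc}}$ and $q_{\mathrm{rej}}$, and its start state is $(q_{1},\{q_{1}\})$. Every transition except the one on $\dollar$ applies $A_{\sigma}$ to the first component and the transition of $D$ to the second; note that, $D$ being deterministic, after any prefix $\cent w$ the second component takes a single value $S = \mathrm{supp}(v_{|\cent w|})$ across the whole support. The $\dollar$-transition, starting from $(q,S)$, first computes $S' = \{\, q_{j} \mid A_{\dollar}[j,i] > 0 \text{ for some } q_{i} \in S \,\}$; if $S' \cap Q_{p} \neq \emptyset$ it behaves like $A_{\dollar}$ on the first component and stores $S'$ in the second, whereas if $S' \cap Q_{p} = \emptyset$ it moves all the probability to $q_{\mathrm{acc}}$ (when $\tau = A$) or to $q_{\mathrm{rej}}$ (when $\tau = R$). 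Finally, the postselection accept states of $\mathcal{P}'$ are the pairs $(q,S')$ with $q \in Q_{pa}$ and $S' \cap Q_{p} \neq \emptyset$, together with $q_{\mathrm{acc}}$; the postselection reject states are the pairs $(q,S')$ with $q \in Q_{pr}$ and $S' \cap Q_{p} \neq \emptyset$, together with $q_{\mathrm{rej}}$; all other states are nonpostselection.

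The remaining checks are routine. If $w \notin Z$ then the $D$-component never influences the probabilities and the final support meets $Q_{p}$, so $\mathcal{P}'$ reproduces $p_{\mathcal{P}}^{a}(w)$ and $p_{\mathcal{P}}^{r}(w)$ exactly; in particular their sum, the postselection probability, is positive and $f_{\mathcal{P}'}^{a}(w) = f_{\mathcal{P}}^{a}(w)$. If $w \in Z$ then all of $\mathcal{P}'$'s probability ends up in $q_{\mathrm{acc}}$ (if $\tau = A$) or $q_{\mathrm{rej}}$ (if $\tau = R$), so $\mathcal{P}'$ accepts, respectively rejects, with probability $1$, which is exactly what the Latvian rule prescribes for $\mathcal{P}$, and again the total postselection probability is positive. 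Hence $\mathcal{P}'$ is a legal RT-PostPFA, $f_{\mathcal{P}'}^{a}$ agrees with $f_{\mathcal{P}}^{a}$ on every string, and $\mathcal{P}'$ recognizes $L$ with error bound $\epsilon$, so $L \in \mathsf{PostS}$.

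The one genuinely delicate ingredient is the regularity of $Z$, which rests entirely on the no-cancellation property of stochastic matrices; this is exactly why the ``negligible default paths'' idea sketched just before the theorem fails for finite automata — the true postselection probability can shrink arbitrarily fast with $|w|$ — and must be replaced by the support-tracking construction above, an argument that is specific to the classical setting.
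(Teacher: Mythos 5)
Your proposal is correct and follows essentially the same route as the paper: both rely on the observation that the set of inputs with vanishing postselection probability is regular (no cancellation in stochastic matrices, so the support evolves deterministically), build a DFA for it, and run it in product with the original machine to redirect the decision according to $\tau$ on exactly those inputs. The only differences are cosmetic (you determinize by explicit subset construction and reroute probability on the $\dollar$-transition to fresh states, while the paper simply redefines the postselection sets on the tensor product), and your closing remark about why this argument is specific to the classical setting matches the paper's subsequent discussion of $\mathsf{NQAL}$.
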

\begin{proof}
	We only need to show that $ \mathsf{LPostS} $ $ \subseteq $ $ \mathsf{PostS} $, the other direction is trivial.
	Let $ L $ be in $ \mathsf{LPostS} $ and let $ \mathcal{P} $ with state set $Q$, postselection states $Q_{p}= Q_{pa} \cup  Q_{pr} $, and $ \tau \in \{A,R\} $ be the RT-LPostPFA
	recognizing $ L $ with error bound $ \epsilon < \frac{1}{2} $.
	Suppose that $ L' $ is the set of strings that lead 
	$ \mathcal{P} $ to the postselection set with zero probability.
	By designating all postselection states as accepting states and removing the probability
	values of transitions, we obtain a real-time nondeterministic finite automaton 
	which recognizes $ \overline{L'} $.
	Thus, there exists a real-time deterministic finite automaton, say $ \mathcal{D} $, recognizing $ L' $. Let $Q_{ \mathcal{D} }$ and $A_{\mathcal{D} }$ be the overall state set, and the set of accept states of $ \mathcal{D} $, respectively.

We combine $ \mathcal{P} $ and $ \mathcal{D} $ with a tensor product to obtain a RT-PostPFA $ \mathcal{P}' $. The postselection state set of $ \mathcal{P}' $ is 
$((Q \setminus Q_{p})\otimes A_{\mathcal{D}}) \cup (Q_{p}\otimes(Q_{ \mathcal{D}} \setminus A_{\mathcal{D}}))$. 
The postselection accept states of $ \mathcal{P}' $ are:
\begin{equation}
	\left\lbrace
	\begin{array}{lll}
		((Q \setminus Q_{p})\otimes A_{\mathcal{D}}) \cup (Q_{pa}\otimes(Q_{ \mathcal{D}} \setminus A_{\mathcal{D}}))
			& , ~~ & \tau = ``A" \\
		Q_{pa}\otimes(Q_{ \mathcal{D}} \setminus A_{\mathcal{D}}) & , & \tau = ``R"
	\end{array}
	\right..
\end{equation}
	$ \mathcal{P}' $ is structured so that 
	if the input string $w$ is in $ L' $, 
	the decision is given deterministically with respect to $ \tau $, and
	if  $w \notin L' $, (that is, the probability of postselection by $ \mathcal{P} $ is nonzero,)
	the decision is given by the standard postselection procedure.
	Therefore, $ L $ is recognized by $ \mathcal{P}' $ with the same error bound as  $ \mathcal{P} $, meaning that $ L \in \mathsf{PostS}$.
\end{proof}

\begin{corollary}
	$ \mathsf{ZPostS} $=$ \mathsf{ZLPostS} $.
\end{corollary}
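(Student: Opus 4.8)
The plan is to note first that $\mathsf{ZPostS} \subseteq \mathsf{ZLPostS}$ is immediate: every RT-PostPFA is a RT-LPostPFA whose extra component $\tau$ is never consulted, because the postselection probability is positive on every input by the definition of a RT-PostPFA, and this identification obviously preserves zero error. So the substantive direction is $\mathsf{ZLPostS} \subseteq \mathsf{ZPostS}$, and for this I would reuse, essentially verbatim, the construction from the proof of the preceding theorem ($\mathsf{LPostS} = \mathsf{PostS}$), simply checking that it is error-bound-preserving and hence transports the value $\epsilon = 0$.

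Concretely: let $L \in \mathsf{ZLPostS}$ be recognized with zero error by a RT-LPostPFA $\mathcal{P}$ with state set $Q$, postselection states $Q_p = Q_{pa} \cup Q_{pr}$, and fallback component $\tau \in \{A,R\}$. As before, let $L'$ be the set of strings on which $\mathcal{P}$ reaches $Q_p$ with probability zero; stripping the transition probabilities from $\mathcal{P}$ and treating the postselection states as accepting yields a real-time NFA recognizing $\overline{L'}$, so $L'$ is regular and is recognized by a DFA $\mathcal{D}$. I would then form the tensor product $\mathcal{P}'$ of $\mathcal{P}$ and $\mathcal{D}$, with exactly the postselection-state and postselection-accept-state designations specified in that proof. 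The verification is the same two-case analysis: on inputs $w \in L'$, $\mathcal{P}'$ decides deterministically according to $\tau$, matching the behaviour forced on $\mathcal{P}$; and on inputs $w \notin L'$ the deterministic side machine $\mathcal{D}$ merely multiplies the surviving probabilities by $1$, so $f_{\mathcal{P}'}^a(w) = f_{\mathcal{P}}^a(w)$. Hence $\mathcal{P}'$ makes no error whenever $\mathcal{P}$ makes none, i.e. $L \in \mathsf{ZPostS}$.

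I do not expect a real obstacle here; the only points to double-check are the same two that were already implicit in the bounded-error case. First, $\mathcal{P}'$ must genuinely satisfy the RT-PostPFA requirement of positive postselection probability on every input: this holds because on $w \in L'$ the block $(Q \setminus Q_p) \otimes A_{\mathcal{D}}$ carries the full mass, while on $w \notin L'$ the block $Q_p \otimes (Q_{\mathcal{D}} \setminus A_{\mathcal{D}})$ carries the positive mass that $\mathcal{P}$ placed on $Q_p$. Second, one must observe that $\mathcal{D}$, being deterministic, contributes no new error, so the error bound transfers unchanged and in particular zero goes to zero. (Combining with the already-established $\mathsf{REG} = \mathsf{ZPostS}$ would additionally show $\mathsf{REG} = \mathsf{ZLPostS}$, but that is not needed for the corollary.)
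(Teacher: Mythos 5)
Your proposal is correct and matches the paper's intent exactly: the corollary is stated without a separate proof precisely because the construction in the preceding theorem preserves the error bound verbatim, so $\epsilon=0$ transports to $\epsilon=0$. Your explicit checks (the trivial inclusion, positivity of the postselection mass in both cases, and that the deterministic factor $\mathcal{D}$ introduces no error) are the right ones and raise no issues.
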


However, we cannot use the same idea in the quantum case due to the
fact that the class of the languages recognized
by real-time quantum finite automata with cutpoint zero ($ \mathsf{NQAL} $)
is a proper superclass of $ \mathsf{REG} $ \cite{BP02,NIHK02,YS10A}.

\begin{lemma}
	$ \mathsf{NQAL} $ $ \cup $ $ \mathsf{coNQAL} $ $ \subseteq $ $ \mathsf{ZLPostQAL} $.
\end{lemma}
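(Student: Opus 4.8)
The plan is to show the two inclusions $\mathsf{NQAL} \subseteq \mathsf{ZLPostQAL}$ and $\mathsf{coNQAL} \subseteq \mathsf{ZLPostQAL}$ separately; since $\mathsf{ZLPostQAL}$ is closed under complementation (swap the postselection accept/reject labels and flip $\tau$ between $A$ and $R$), it in fact suffices to prove only the first one. So let $L \in \mathsf{NQAL}$, witnessed by a RT-QFA $\mathcal{M} = (Q,\Sigma,\{\mathcal{E}_{\sigma}\},q_1,Q_a)$ that recognizes $L$ with cutpoint zero, i.e. $f_{\mathcal{M}}^{a}(w) > 0$ iff $w \in L$. The idea is to run $\mathcal{M}$ verbatim and reinterpret its accepting states as postselection \emph{accept} states, its non-accepting states as postselection \emph{reject} states, and set $\tau = R$. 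Call the resulting RT-LPostQFA $\mathcal{P}$, with $Q_{pa} = Q_a$ and $Q_{pr} = Q \setminus Q_a$.

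Now I would check the two cases. If $w \notin L$, then $f_{\mathcal{M}}^{a}(w) = 0$, so $p_{\mathcal{P}}^{a}(w) = tr(P_{pa}\rho_{|\tilde w|}) = 0$ and $p_{\mathcal{P}}^{r}(w) = tr(P_{pr}\rho_{|\tilde w|}) = 1 > 0$; hence the postselection set has nonzero probability, normalization gives $f_{\mathcal{P}}^{a}(w) = 0$, and $w$ is rejected with certainty. If $w \in L$, then $f_{\mathcal{M}}^{a}(w) > 0$, so $p_{\mathcal{P}}^{a}(w) > 0$. Here there are two sub-cases depending on whether $p_{\mathcal{P}}^{r}(w) = 1 - f_{\mathcal{M}}^{a}(w)$ is zero or positive. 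If it is positive, the postselection set has positive total probability and normalization yields $f_{\mathcal{P}}^{a}(w) = \frac{p_{\mathcal{P}}^{a}(w)}{p_{\mathcal{P}}^{a}(w) + p_{\mathcal{P}}^{r}(w)} = p_{\mathcal{P}}^{a}(w) > 0$; this is not yet $1$, so zero-error recognition has not been achieved in this naive form. If $f_{\mathcal{M}}^{a}(w) = 1$ exactly, then $p_{\mathcal{P}}^{r}(w) = 0$, the \emph{whole} postselection set has probability $1$ and $f_{\mathcal{P}}^{a}(w) = 1$, so $w$ is accepted with certainty — but this case is not guaranteed.

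The fix, which is where the Latvian feature earns its keep, is to observe that for the purpose of zero-error recognition I need accepted strings to reach the postselection \emph{accept} set with probability $1$ and the postselection \emph{reject} set with probability $0$, while rejected strings do the opposite. The clean way to arrange this is to make the postselection reject set unreachable on accepted inputs: put \emph{only} the accepting states of $\mathcal{M}$ into the postselection set, leaving every non-accepting state as a nonpostselection state, and again set $\tau = R$. Then on $w \in L$ the postselection probability is $f_{\mathcal{M}}^{a}(w) > 0$, it is concentrated entirely in $Q_{pa}$, so $f_{\mathcal{P}}^{a}(w) = 1$; on $w \notin L$ the postselection probability is exactly $0$, so by $\tau = R$ the input is rejected with probability $1$. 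Thus $\mathcal{P}$ recognizes $L$ with zero error, giving $L \in \mathsf{ZLPostQAL}$, and by the complementation closure noted above $\mathsf{coNQAL} \subseteq \mathsf{ZLPostQAL}$ as well, completing the lemma.

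The one point that needs care — really the only obstacle — is the treatment of cutpoint-zero acceptance: one must use the precise definition $f_{\mathcal{M}}^{a}(w) > 0 \iff w \in L$ and be sure that the postselection set is set up so that its total probability is \emph{exactly} zero on non-members (so $\tau$ takes over cleanly) and so that, on members, \emph{all} of the surviving probability is accepting. Using only $Q_a$ as the postselection set achieves both simultaneously; any attempt to also install explicit postselection reject states reintroduces the normalization issue and fails to give zero error. No runtime or probability-amplification subtleties arise, since the construction is a relabeling of $\mathcal{M}$ with no change to its operators.
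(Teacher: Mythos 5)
Your proposal is correct and its final construction---postselection accept set equal to $Q_a$ only, no postselection reject states, and $\tau = R$---is exactly the paper's; handling $\mathsf{coNQAL}$ via complement-closure of $\mathsf{ZLPostQAL}$ is merely a repackaging of the paper's symmetric second construction ($\tau = A$ with only postselection reject states). The false start you identify and discard (making all non-accepting states postselection reject states) is correctly diagnosed as failing zero error, so the argument as a whole is sound and essentially identical to the paper's.
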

\begin{proof}
	For $ L \in $ $ \mathsf{NQAL} $, designate the accepting states of the QFA recognizing $ L $ with cutpoint zero as postselection accepting states
	with $ \tau = R $. (There are no postselection reject states.)
	For $ L \in $ $ \mathsf{coNQAL} $, designate the accepting states of the QFA recognizing $ \overline{L} $ with cutpoint zero
	as postselection rejecting states with $ \tau = A $. (There are no postselection accept states.)
\end{proof}
\begin{lemma}
	$ \mathsf{ZLPostQAL} $ $ \subseteq $ $ \mathsf{NQAL} $ $ \cup $ $ \mathsf{coNQAL} $.
\end{lemma}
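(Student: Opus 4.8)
The plan is to do a simple case analysis on the parameter $\tau \in \{A,R\}$ of the given RT-LPostQFA, showing that the recognized language lands in $\mathsf{NQAL}$ in one case and in $\mathsf{coNQAL}$ in the other. Let $\mathcal{M}$ be a RT-LPostQFA with state set $Q$, postselection states $Q_{p}=Q_{pa}\cup Q_{pr}$, and parameter $\tau$, recognizing $L$ with zero error, and let $\rho_{|\tilde{w}|}$ denote its final density matrix on input $w$. The crucial observation is that, once one unwinds the definition of $f^{a}_{\mathcal{M}}$ together with the zero-error requirement, one of the two ``before-postselection'' quantities becomes a faithful $\{0,>0\}$-indicator of membership in $L$.

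First I would treat $\tau = R$. If $w \in L$, zero error forces $f^{a}_{\mathcal{M}}(w)=1$; since the convention $\tau=R$ would reject $w$ outright were the postselection probability zero, that probability must in fact be nonzero, and then $f^{a}_{\mathcal{M}}(w)=1$ gives $p^{r}_{\mathcal{M}}(w)=0$, hence $p^{a}_{\mathcal{M}}(w)>0$. Conversely, if $w\notin L$, then either the postselection probability is zero (and $w$ is rejected via $\tau=R$) or it is nonzero and $f^{r}_{\mathcal{M}}(w)=1$ forces $p^{a}_{\mathcal{M}}(w)=0$; in both subcases $p^{a}_{\mathcal{M}}(w)=0$. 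Thus $w\in L \iff p^{a}_{\mathcal{M}}(w)>0$, where $p^{a}_{\mathcal{M}}(w)=tr(P_{pa}\rho_{|\tilde{w}|})$. But this last quantity is exactly the acceptance probability of the plain RT-QFA $(Q,\Sigma,\{\mathcal{E}_{\sigma}\},q_{1},Q_{pa})$ obtained by discarding all postselection machinery and declaring $Q_{pa}$ to be the accept set. Hence $L$ is recognized with (strict) cutpoint zero, i.e. $L\in\mathsf{NQAL}$.

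The case $\tau = A$ is symmetric: the same bookkeeping yields $w\notin L \iff p^{r}_{\mathcal{M}}(w)>0$, and $p^{r}_{\mathcal{M}}(w)=tr(P_{pr}\rho_{|\tilde{w}|})$ is the acceptance probability of the RT-QFA with accept set $Q_{pr}$, so $\overline{L}\in\mathsf{NQAL}$, i.e. $L\in\mathsf{coNQAL}$. Combining the two cases gives $\mathsf{ZLPostQAL}\subseteq\mathsf{NQAL}\cup\mathsf{coNQAL}$, which (together with the preceding lemma) also pins down $\mathsf{ZLPostQAL}=\mathsf{NQAL}\cup\mathsf{coNQAL}$.

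There is no deep obstacle here: essentially all the content is in carefully handling the degenerate inputs for which the total postselection probability is exactly $0$ — where the $\tau$-convention, rather than the normalized ratio, decides the outcome — and in checking that the resulting membership criterion is a \emph{strict} ``$>0$'' condition, so that it matches the definition of $\mathsf{NQAL}$ (cutpoint zero) rather than its co-class. One should also note that taking the underlying machine to be a general RT-QFA (rather than a KWQFA) is precisely what makes the operation ``forget postselection, keep $Q_{pa}$ as accept states'' produce a legitimate RT-QFA with the desired acceptance function.
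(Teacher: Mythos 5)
Your proof is correct and follows essentially the same route as the paper: a case split on $\tau$, observing that under zero error the quantity $p^{a}_{\mathcal{M}}(w)$ (for $\tau=R$) or $p^{r}_{\mathcal{M}}(w)$ (for $\tau=A$) is a strict $>0$ indicator of membership in $L$ or $\overline{L}$, so the underlying RT-QFA with accept set $Q_{pa}$ or $Q_{pr}$ recognizes the language with cutpoint zero. You simply spell out more explicitly than the paper the handling of inputs with zero total postselection probability, which is a welcome but not substantively different elaboration.
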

\begin{proof}
	Let $ L $ be a member of $ \mathsf{ZLPostQAL} $ and $ \mathcal{M} $ be a RT-LPostQFA recognizing $ L $ with zero error.
	If $ \tau=R $, for all $ w \in L $, we have that $ p^{a}_{\mathcal{M}}(w) $ is nonzero, and $ p^{r}_{\mathcal{M}}(w) = 0 $.
	Thus, we can design a RT-QFA recognizing $ L $ with cutpoint zero.
	Similarly, if $ \tau=A $, 
	for all $ w \notin L $, we have $ p^{r}_{\mathcal{M}}(w) $ is nonzero, and $ p^{a}_{\mathcal{M}}(w) = 0 $.
	Thus, we can design a RT-QFA recognizing $ \overline{L} $ with cutpoint zero.
\end{proof}
\begin{theorem}
	$ \mathsf{ZLPostQAL} $ $ = $ $ \mathsf{NQAL} $ $ \cup $ $ \mathsf{coNQAL} $.
\end{theorem}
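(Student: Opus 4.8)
The plan is to prove the theorem as an immediate consequence of the two preceding lemmas. Since the lemma statements establish $\mathsf{NQAL} \cup \mathsf{coNQAL} \subseteq \mathsf{ZLPostQAL}$ and $\mathsf{ZLPostQAL} \subseteq \mathsf{NQAL} \cup \mathsf{coNQAL}$, the double inclusion yields the equality directly. Thus the body of the proof is essentially one sentence: combine the two lemmas. To make this self-contained, I would also recall briefly why each inclusion holds, so that the reader need not flip back.

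First I would restate the easy direction: given $L \in \mathsf{NQAL}$, take the RT-QFA recognizing $L$ with cutpoint zero and reinterpret its accepting states as postselection \emph{accept} states, setting $\tau = R$ and leaving the postselection reject set empty; then for $w \in L$ the pre-postselection acceptance probability is positive and the rejection probability is zero, so postselection gives acceptance with probability $1$, while for $w \notin L$ the acceptance probability is zero and the ``$\tau = R$'' fallback rejects with probability $1$. The symmetric construction with $\tau = A$ handles $L \in \mathsf{coNQAL}$, using the cutpoint-zero RT-QFA for $\overline{L}$ and the reject-state fallback. This shows $\mathsf{NQAL} \cup \mathsf{coNQAL} \subseteq \mathsf{ZLPostQAL}$.

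For the reverse direction I would recall that a zero-error RT-LPostQFA $\mathcal{M}$ recognizing $L$ must have a fixed $\tau \in \{A, R\}$. If $\tau = R$, zero error forces $p^{r}_{\mathcal{M}}(w) = 0$ for every $w \in L$ (otherwise some rejecting weight would be normalized in), and $p^{a}_{\mathcal{M}}(w) = 0$ for every $w \notin L$ together with the fallback still rejecting; hence the RT-QFA obtained by keeping only the postselection accept states as accept states recognizes $L$ with cutpoint zero, so $L \in \mathsf{NQAL}$. Symmetrically, $\tau = A$ forces $L \in \mathsf{coNQAL}$ via the postselection reject states. Hence $\mathsf{ZLPostQAL} \subseteq \mathsf{NQAL} \cup \mathsf{coNQAL}$, and the equality follows.

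There is no real obstacle here: the theorem is purely the conjunction of the two lemmas, both of which are already proved in the excerpt, and the only subtlety is the observation (implicit in the lemmas) that $\tau$ is a single fixed bit, which is exactly why one gets a \emph{union} rather than an intersection on the right-hand side. I would therefore keep the proof to a terse remark, perhaps a single displayed line, citing the two lemmas. If a reader wanted more, the one delicate point worth spelling out is that in the $\mathsf{ZLPostQAL} \subseteq \cdots$ direction the zero-error condition must be applied separately on $L$ and on $\overline{L}$ and that only one of the two resulting cutpoint-zero automata is guaranteed to exist depending on $\tau$, which is precisely what prevents strengthening the conclusion to $\mathsf{NQAL} \cap \mathsf{coNQAL}$.

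\begin{proof}
	This is immediate from the two preceding lemmas, which together give both inclusions. For completeness we recall the idea. For the inclusion $ \mathsf{NQAL} \cup \mathsf{coNQAL} \subseteq \mathsf{ZLPostQAL} $: if $ L \in \mathsf{NQAL} $, take a RT-QFA recognizing $ L $ with cutpoint zero, declare its accepting states to be postselection accepting states (with empty postselection reject set) and set $ \tau = R $; then $ w \in L $ implies $ p^{a}_{\mathcal{M}}(w)>0 $ and $ p^{r}_{\mathcal{M}}(w)=0 $, so $ w $ is accepted with probability $ 1 $, and $ w \notin L $ implies zero postselection probability, so the $ \tau=R $ clause rejects $ w $ with probability $ 1 $. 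The case $ L \in \mathsf{coNQAL} $ is symmetric, using the cutpoint-zero automaton for $ \overline{L} $, postselection reject states, and $ \tau = A $. For the reverse inclusion, let $ \mathcal{M} $ be a RT-LPostQFA recognizing $ L $ with zero error. The component $ \tau $ is a fixed element of $ \{A,R\} $. If $ \tau = R $, then for every $ w \in L $ zero error forces $ p^{r}_{\mathcal{M}}(w)=0 $ while $ p^{a}_{\mathcal{M}}(w)>0 $, and for every $ w \notin L $ we have $ p^{a}_{\mathcal{M}}(w)=0 $; hence keeping only the postselection accept states as accept states yields a RT-QFA recognizing $ L $ with cutpoint zero, so $ L \in \mathsf{NQAL} $. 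If $ \tau = A $, the analogous argument on $ \overline{L} $ gives $ L \in \mathsf{coNQAL} $. Combining the two inclusions proves the equality.
\end{proof}
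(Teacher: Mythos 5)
Your proposal is correct and coincides with the paper's treatment: the theorem is stated there with no separate proof precisely because it is the conjunction of the two preceding lemmas, whose arguments you recap accurately (including the key point that the fixed bit $\tau$ is what yields a union rather than an intersection). Nothing is missing.
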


\begin{lemma}
	$ L_{eq\overline{eq}} = \{ aw_{1} \cup bw_{2} \mid w_{1} \in L_{eq}, w_{2} \in \overline{L_{eq}} \}  \in $
	$ \mathsf{PostS} $.
\end{lemma}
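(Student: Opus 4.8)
The plan is to build a single RT-PostPFA that inspects the first input symbol, commits deterministically to simulating one of two sub-machines on the remainder of the input, and shunts the degenerate cases into an absorbing postselection reject state so that the nonzero-postselection-probability requirement is never violated. Reading the definition, $L_{eq\overline{eq}} = aL_{eq} \cup b\overline{L_{eq}}$, these two parts are disjoint, $\varepsilon \notin L_{eq\overline{eq}}$, and every nonempty word over $\{a,b\}$ begins with $a$ or with $b$. By Corollary~\ref{corollary:L-eq} together with Theorem~\ref{thm:posres}, $L_{eq} \in \mathsf{PostS}$, and by closure under complementation (Theorem~\ref{thm:post-closure}), $\overline{L_{eq}} \in \mathsf{PostS}$; fix RT-PostPFAs $\mathcal{P}_1$ and $\mathcal{P}_2$ recognizing $L_{eq}$ and $\overline{L_{eq}}$, respectively, each with some error bound $\epsilon < \frac{1}{2}$.

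I would then define a RT-PostPFA $\mathcal{P}'$ on the disjoint union of state sets $\{q_1\} \cup Q^{(1)} \cup Q^{(2)} \cup \{q_{rej}\}$, where $Q^{(i)}$ is a private copy of the state set of $\mathcal{P}_i$ and $q_1$ is the (nonpostselection) initial state. The $\cent$-transition of $\mathcal{P}'$ leaves $q_1$ fixed. Its $a$-transition maps $q_1$ to the distribution occupied by $\mathcal{P}_1$ immediately after $\mathcal{P}_1$ reads its own left end-marker (i.e.\ $A^{(1)}_{\cent}$ applied to $\mathcal{P}_1$'s initial vector), embedded into $Q^{(1)}$, and otherwise acts as the $a$-transition of $\mathcal{P}_1$ on $Q^{(1)}$ and of $\mathcal{P}_2$ on $Q^{(2)}$; its $b$-transition is symmetric, routing $q_1$ into $Q^{(2)}$. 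Its $\dollar$-transition maps $q_1$ to $q_{rej}$ and otherwise acts as the $\dollar$-transition of $\mathcal{P}_1$ on $Q^{(1)}$ and of $\mathcal{P}_2$ on $Q^{(2)}$, with $q_{rej}$ absorbing. Finally, the postselection accept states of $\mathcal{P}'$ are the copies of the postselection accept states of $\mathcal{P}_1$ and of $\mathcal{P}_2$, and its postselection reject states are the copies of the postselection reject states of $\mathcal{P}_1$ and $\mathcal{P}_2$ together with $q_{rej}$.

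The correctness check is a routine induction: on $\cent a w_1 \dollar$, after the prefix $\cent a$ the machine sits in the $Q^{(1)}$-copy of $\mathcal{P}_1$'s post-end-marker configuration, and every subsequent symbol is then processed by $\mathcal{P}_1$'s own matrix, so the final configuration of $\mathcal{P}'$ (supported entirely on $Q^{(1)}$) equals that of $\mathcal{P}_1$ on $w_1$; hence $f^a_{\mathcal{P}'}(aw_1) = f^a_{\mathcal{P}_1}(w_1)$ and the total postselection weight equals $p^a_{\mathcal{P}_1}(w_1) + p^r_{\mathcal{P}_1}(w_1) > 0$. Symmetrically, $f^a_{\mathcal{P}'}(bw_2) = f^a_{\mathcal{P}_2}(w_2)$ with positive postselection weight, and on $\varepsilon$ all weight ends in $q_{rej}$, so $f^a_{\mathcal{P}'}(\varepsilon) = 0$ with postselection weight $1$. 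Since $aw_1 \in L_{eq\overline{eq}} \iff w_1 \in L_{eq}$ and $bw_2 \in L_{eq\overline{eq}} \iff w_2 \in \overline{L_{eq}}$, this shows $\mathcal{P}'$ is a legal RT-PostPFA recognizing $L_{eq\overline{eq}}$ with error bound $\epsilon$, whence $L_{eq\overline{eq}} \in \mathsf{PostS}$.

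I do not expect a genuine obstacle here; the only delicate points are (i) synchronizing $\mathcal{P}'$'s ``$\cent$ then first letter'' steps with $\mathcal{P}_i$'s single $\cent$ step, so that the simulated run of $\mathcal{P}_i$ begins in exactly the configuration $\mathcal{P}_i$ would be in after its end-marker, and (ii) ensuring the nonzero-postselection-probability condition holds for \emph{every} input string, which is the reason the empty word is diverted into the absorbing postselection reject state $q_{rej}$. A more modular variant, which avoids writing $\mathcal{P}'$ out explicitly, is to note that $aL_{eq}$ and $b\overline{L_{eq}}$ are each in $\mathsf{PostS}$ by exactly this first-letter branching (rejecting any word that does not start with the correct letter, again via $q_{rej}$), and then to appeal to closure of $\mathsf{PostS}$ under union from Theorem~\ref{thm:post-closure}.
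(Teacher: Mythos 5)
Your proposal is correct and follows essentially the same route as the paper, whose own proof is just the observation that $L_{eq}\in\mathsf{PostS}$ implies $\overline{L_{eq}}\in\mathsf{PostS}$ (by closure under complementation) followed by the remark that ``it is not hard to design'' the combined RT-PostPFA. Your explicit first-letter-branching construction, with the absorbing postselection reject state guaranteeing nonzero postselection probability on every input, is exactly the machine the paper leaves implicit.
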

\begin{proof}
	Since $ L_{eq} $ is a member of $ \mathsf{PostS} $, $ \overline{L_{eq}} $ is also a member of $ \mathsf{PostS} $.
	Therefore, it is not hard to design a RT-PostPFA recognizing $ L_{eq\overline{eq}} $.
\end{proof}
Since $ L_{eq\overline{eq}} $ is not a member of $ \mathsf{NQAL} $ $ \cup $ $ \mathsf{coNQAL} $ \cite{YS10A},
we can obtain the following theorem.
\begin{theorem}
	$ \mathsf{ZLPostQAL} $ $ \subsetneq $ $ \mathsf{LPostQAL} $.
\end{theorem}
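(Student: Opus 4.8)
The goal is to separate $\mathsf{ZLPostQAL}$ from $\mathsf{LPostQAL}$, i.e., to exhibit a language in the bounded-error class that is not in the zero-error class. The excerpt has already done almost all the work: $\mathsf{ZLPostQAL} = \mathsf{NQAL} \cup \mathsf{coNQAL}$ (the preceding theorem), and $L_{eq\overline{eq}}$ has just been shown to lie in $\mathsf{PostS}$ by the lemma immediately above. Since $\mathsf{PostS} = \mathsf{LPostS} \subseteq \mathsf{LPostQAL}$ (the probabilistic-to-quantum direction being the ``trivial'' inclusion noted throughout the paper, and $\mathsf{LPostS} = \mathsf{PostS}$ having been established), we get $L_{eq\overline{eq}} \in \mathsf{LPostQAL}$ for free. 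So the separation follows the moment we know $L_{eq\overline{eq}} \notin \mathsf{NQAL} \cup \mathsf{coNQAL}$, which is precisely what the citation to \cite{YS10A} in the sentence preceding the theorem supplies.

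So the proof I would write is essentially a two-line assembly: first note $L_{eq\overline{eq}} \in \mathsf{PostS} \subseteq \mathsf{LPostQAL}$ (invoking the lemma, the identity $\mathsf{LPostS}=\mathsf{PostS}$, and the routine simulation of a PFA by a QFA); then note $L_{eq\overline{eq}} \notin \mathsf{NQAL} \cup \mathsf{coNQAL} = \mathsf{ZLPostQAL}$ by \cite{YS10A} and the characterization theorem; conclude that the inclusion $\mathsf{ZLPostQAL} \subseteq \mathsf{LPostQAL}$ (which holds trivially since zero error is a special case of bounded error) is proper. I would present it in at most three sentences inside a single \texttt{proof} environment.

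**The main obstacle.** There is no real obstacle in the proof itself --- it is pure bookkeeping given the machinery assembled. The one place that actually carries content is \emph{outside} this theorem: the claim, borrowed from \cite{YS10A}, that $L_{eq\overline{eq}}$ escapes $\mathsf{NQAL} \cup \mathsf{coNQAL}$. If one had to reprove that in-house, the argument would go by a pumping/closure property of $\mathsf{NQAL}$: $\mathsf{NQAL}$ is closed under intersection with regular languages and under inverse homomorphism, so membership of $L_{eq\overline{eq}}$ would force membership of both $L_{eq}$ (via the $a$-prefixed part) and $\overline{L_{eq}}$ (via the $b$-prefixed part) in $\mathsf{NQAL}$; but $\mathsf{NQAL}$ being not closed under complement, and $L_{eq}$ being a canonical witness of that (one of $L_{eq}, \overline{L_{eq}}$ lies outside $\mathsf{NQAL}$), yields a contradiction --- and symmetrically for $\mathsf{coNQAL}$. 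Since the paper is entitled to cite \cite{YS10A} for exactly this fact, I would not reproduce that argument, and the theorem's proof reduces to citing it.

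\begin{proof}
	By the previous lemma, $ L_{eq\overline{eq}} \in \mathsf{PostS} = \mathsf{LPostS} \subseteq \mathsf{LPostQAL} $, where the last inclusion follows since any RT-LPostPFA can be simulated by a RT-LPostQFA. On the other hand, $ L_{eq\overline{eq}} $ is not in $ \mathsf{NQAL} \cup \mathsf{coNQAL} $ \cite{YS10A}, which equals $ \mathsf{ZLPostQAL} $ by the preceding theorem. Since $ \mathsf{ZLPostQAL} \subseteq \mathsf{LPostQAL} $ trivially, the inclusion is proper.
\end{proof}
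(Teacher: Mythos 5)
Your proof is correct and follows essentially the same route as the paper: $L_{eq\overline{eq}}$ is placed in $\mathsf{LPostQAL}$ via $\mathsf{PostS}=\mathsf{LPostS}$ and the trivial probabilistic-to-quantum simulation, and excluded from $\mathsf{ZLPostQAL}=\mathsf{NQAL}\cup\mathsf{coNQAL}$ by the cited result of \cite{YS10A}. Nothing is missing.
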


By using the fact\footnote{$ L_{pal} $ was proven to be in $\mathsf{ZLPostQAL}$ for the first time in \cite{LSF09}.} that $ L_{pal} \in $ $ \mathsf{coNQAL} $ $ \setminus $ $ \mathsf{NQAL} $ \cite{YS10A},
we can state that RT-LPostQFAs are strictly more powerful than RT-PostQFAs, at least in the zero-error mode:

\begin{corollary}
	$ \mathsf{ZPostQAL} $ $ \subsetneq $ $ \mathsf{ZLPostQAL} $.
\end{corollary}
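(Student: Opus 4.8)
The plan is to chain together the established inclusions. We want to show $\mathsf{ZPostQAL} \subsetneq \mathsf{ZLPostQAL}$, i.e. that the inclusion is proper. The inclusion $\mathsf{ZPostQAL} \subseteq \mathsf{ZLPostQAL}$ is immediate, since every RT-PostQFA is trivially a RT-LPostQFA (the Latvian capability is simply never invoked, because our model guarantees nonzero postselection probability on every input), and zero error is preserved. So the entire content is in producing a witness language that separates the two classes.

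For the separation I would use $L_{pal}$. First recall from the corollary following the displayed corollary chain that $\mathsf{ZPostS} = \mathsf{REG} = \mathsf{ZPostS} \subseteq \mathsf{ZPostQAL} \subseteq \mathsf{NQAL} \cap \mathsf{coNQAL}$; in particular any language in $\mathsf{ZPostQAL}$ lies in $\mathsf{NQAL} \cap \mathsf{coNQAL}$. Now invoke the cited fact that $L_{pal} \in \mathsf{coNQAL} \setminus \mathsf{NQAL}$. Since $L_{pal} \notin \mathsf{NQAL}$, it is certainly not in $\mathsf{NQAL} \cap \mathsf{coNQAL}$, hence $L_{pal} \notin \mathsf{ZPostQAL}$. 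On the other hand, $L_{pal} \in \mathsf{coNQAL} \subseteq \mathsf{NQAL} \cup \mathsf{coNQAL} = \mathsf{ZLPostQAL}$ by the theorem just proved. Therefore $L_{pal}$ witnesses $\mathsf{ZPostQAL} \subsetneq \mathsf{ZLPostQAL}$.

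The main obstacle is not really an obstacle at all for the purposes of this paper, since all the nontrivial ingredients are quoted: the characterization $\mathsf{ZLPostQAL} = \mathsf{NQAL} \cup \mathsf{coNQAL}$ (proved in the two preceding lemmas), the containment $\mathsf{ZPostQAL} \subseteq \mathsf{NQAL} \cap \mathsf{coNQAL}$ (the earlier corollary), and the external fact $L_{pal} \in \mathsf{coNQAL} \setminus \mathsf{NQAL}$ from \cite{YS10A}. The only thing to be careful about is the direction of the set differences: one must use that $L_{pal}$ fails to be in $\mathsf{NQAL}$ (which rules it out of the intersection bounding $\mathsf{ZPostQAL}$ from above) while it does belong to $\mathsf{coNQAL}$ (which places it in the union equal to $\mathsf{ZLPostQAL}$). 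Assembling these four facts gives the strict inclusion in one line, which is exactly why the statement is phrased as a corollary.
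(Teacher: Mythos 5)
Your proof is correct and follows exactly the paper's intended argument: the paper's one-line justification is precisely that $L_{pal}\in\mathsf{coNQAL}\setminus\mathsf{NQAL}$, which places it inside $\mathsf{ZLPostQAL}=\mathsf{NQAL}\cup\mathsf{coNQAL}$ but outside $\mathsf{NQAL}\cap\mathsf{coNQAL}\supseteq\mathsf{ZPostQAL}$. You have simply spelled out the steps the paper leaves implicit, including the trivial inclusion $\mathsf{ZPostQAL}\subseteq\mathsf{ZLPostQAL}$.
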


\begin{theorem}
	$ \mathsf{LPostQAL} $ is closed under complementation.
\end{theorem}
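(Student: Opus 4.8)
The plan is to mimic the argument used for $\mathsf{PostQAL}$ in Theorem~\ref{thm:post-closure}: given an RT-LPostQFA $\mathcal{M}$ recognizing a language $L$ with bounded error, construct a new RT-LPostQFA $\mathcal{M}'$ recognizing $\overline{L}$ by swapping the roles of the postselection accept and reject states. The subtlety that did not arise in the standard (non-Latvian) case is the extra component $\tau \in \{A,R\}$, which dictates the decision on inputs for which the postselection set has probability zero. Complementing must therefore also flip $\tau$: if $\mathcal{M}$ uses $\tau = A$, then $\mathcal{M}'$ should use $\tau = R$, and vice versa.

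First I would fix notation: let $\mathcal{M} = (Q,\Sigma,\{\mathcal{E}_{\sigma}\},q_{1},Q_{p})$ with $Q_{p} = Q_{pa} \cup Q_{pr}$ and postselection tag $\tau$, recognizing $L$ with error bound $\epsilon < \frac{1}{2}$. Define $\mathcal{M}'$ to be identical to $\mathcal{M}$ in all components, except that its postselection accept state set is $Q_{pr}$, its postselection reject state set is $Q_{pa}$, and its tag is $\tau'$, where $\tau' = R$ if $\tau = A$ and $\tau' = A$ if $\tau = R$. The projector $P_{p}$ onto the postselection subspace is unchanged, so the set of inputs $w$ with $tr(P_{p}\rho_{|\tilde{w}|}) = 0$ is exactly the same $L'$ (say) for both machines.

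Next I would verify correctness by a case split on whether an input $w$ lies in $L'$. If $w \notin L'$, then $p_{\mathcal{M}'}^{a}(w) = p_{\mathcal{M}}^{r}(w)$ and $p_{\mathcal{M}'}^{r}(w) = p_{\mathcal{M}}^{a}(w)$, so by Equations~\ref{eq:postacc} and \ref{eq:postrej} we get $f_{\mathcal{M}'}^{a}(w) = f_{\mathcal{M}}^{r}(w)$ and $f_{\mathcal{M}'}^{r}(w) = f_{\mathcal{M}}^{a}(w)$; hence $\mathcal{M}'$ accepts $w$ with probability at least $1-\epsilon$ exactly when $\mathcal{M}$ rejects it with that probability, i.e.\ exactly when $w \in \overline{L}$. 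If $w \in L'$, the decision is forced by the tag: $\mathcal{M}$ accepts $w$ with probability $1$ iff $\tau = A$, and since $\tau' = R$ in that case, $\mathcal{M}'$ rejects $w$ with probability $1$; symmetrically for $\tau = R$. In either subcase the outcome of $\mathcal{M}'$ on $w$ is the negation of that of $\mathcal{M}$, so $\mathcal{M}'$ recognizes $\overline{L}$ with error bound $\epsilon$, giving $\overline{L} \in \mathsf{LPostQAL}$.

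The argument is essentially bookkeeping, so there is no real obstacle; the only point requiring care is the one just highlighted — remembering that the Latvian tag $\tau$ must be flipped along with the postselection accept/reject designations, since otherwise the inputs in $L'$ would be decided incorrectly. One should also note in passing that the admissibility conditions on the operators $\mathcal{E}_{\sigma}$ are untouched by the construction, so $\mathcal{M}'$ is a legitimate RT-LPostQFA.
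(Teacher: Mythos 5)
Your proposal is correct and matches the paper's own proof exactly: swap the postselection accept and reject state sets and replace $\tau$ by $\{A,R\}\setminus\tau$. The case analysis you add (inputs with zero versus nonzero postselection probability) is just a more detailed verification of the same construction.
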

\begin{proof}
	If a language is recognized by a RT-LPostQFA with bounded error,
	by swapping the accepting and rejecting postselection states and
	by setting $ \tau $ to $ \{A,R\} \setminus \tau $, we obtain a new RT-LPostQFA
	recognizing the complement of the language with bounded error.
	Therefore, $ \mathsf{LPostQAL} $ is closed under complementation.
\end{proof}

\begin{theorem}
	\label{thm:LPostQAL-subset-uQAL}
	$ \mathsf{LPostQAL} $ $ \subseteq $ $ \mathsf{uQAL} $ $ (\mathsf{uS}) $.
\end{theorem}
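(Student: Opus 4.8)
The plan is to strip a given RT-LPostQFA $\mathcal{M}$ recognizing $L$ with error bound $\epsilon<\frac12$ down to its two ``pre-postselection'' probability functions $p^{a}_{\mathcal{M}}(w)=tr(P_{pa}\rho_{|\tilde{w}|})$ and $p^{r}_{\mathcal{M}}(w)=tr(P_{pr}\rho_{|\tilde{w}|})$, to note that each of these is literally the acceptance-probability function of an ordinary RT-QFA (namely $\mathcal{M}$ with accept set chosen to be $Q_{pa}$, resp.\ $Q_{pr}$), and then to show that membership in $L$ is decided by the sign of $p^{a}_{\mathcal{M}}(w)-p^{r}_{\mathcal{M}}(w)$ against the cutpoint $0$ --- strictly when $\tau=R$, nonstrictly when $\tau=A$. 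Since $\mathsf{uQAL}=\mathsf{uS}$, proving $L\in\mathsf{QAL}\cup\mathsf{coQAL}$ suffices.

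First I would carry out the sign analysis. Fix $w$. If the postselection probability $p^{a}_{\mathcal{M}}(w)+p^{r}_{\mathcal{M}}(w)$ is positive, then $f^{a}_{\mathcal{M}}(w)=\frac{p^{a}_{\mathcal{M}}(w)}{p^{a}_{\mathcal{M}}(w)+p^{r}_{\mathcal{M}}(w)}$, so $w\in L$ forces $f^{a}_{\mathcal{M}}(w)\ge 1-\epsilon>\frac12$, hence $p^{a}_{\mathcal{M}}(w)>p^{r}_{\mathcal{M}}(w)$, and symmetrically $w\notin L$ forces $p^{a}_{\mathcal{M}}(w)<p^{r}_{\mathcal{M}}(w)$. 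If the postselection probability is zero, then $p^{a}_{\mathcal{M}}(w)=p^{r}_{\mathcal{M}}(w)=0$ and the decision on $w$ is made by $\tau$; so when $\tau=R$ this case occurs only for $w\notin L$ (and then $p^{a}_{\mathcal{M}}(w)-p^{r}_{\mathcal{M}}(w)=0$), while when $\tau=A$ it occurs only for $w\in L$. Assembling the cases: if $\tau=R$ then $w\in L \iff p^{a}_{\mathcal{M}}(w)-p^{r}_{\mathcal{M}}(w)>0$, and if $\tau=A$ then $w\in L \iff p^{a}_{\mathcal{M}}(w)-p^{r}_{\mathcal{M}}(w)\ge 0$.

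Next I would build a single RT-QFA $\mathcal{N}$ whose acceptance probability is the affine image $\frac12+\frac12\left(p^{a}_{\mathcal{M}}(w)-p^{r}_{\mathcal{M}}(w)\right)=\frac12\, p^{a}_{\mathcal{M}}(w)+\frac12\left(1-p^{r}_{\mathcal{M}}(w)\right)$ of this difference: take two orthogonal copies of $\mathcal{M}$'s state space, let $\mathcal{N}$'s operator for $\cent$ be the composition of a unitary placing the start state into an equal superposition over the two copies with the block-diagonal pair of copies of $\mathcal{E}_{\cent}$, apply every subsequent $\mathcal{E}_{\sigma}$ block-diagonally in the two copies, and declare the accept states of $\mathcal{N}$ to be $Q_{pa}$ in the first copy together with $Q\setminus Q_{pr}$ in the second. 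This is a legitimate general RT-QFA (the operators are compositions of admissible operators), and by construction $f^{a}_{\mathcal{N}}(w)=\frac12\, p^{a}_{\mathcal{M}}(w)+\frac12\left(1-p^{r}_{\mathcal{M}}(w)\right)$. Combining with the sign analysis, $L=\{w\mid f^{a}_{\mathcal{N}}(w)>\frac12\}$ when $\tau=R$, so $L\in\mathsf{QAL}$; and $L=\{w\mid f^{a}_{\mathcal{N}}(w)\ge\frac12\}$ when $\tau=A$, so $L\in\mathsf{coQAL}$. In either case $L\in\mathsf{uQAL}=\mathsf{uS}$.

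The one genuinely delicate point --- and the step I would state most carefully --- is the zero-postselection-probability case: one must observe that the Latvian default $\tau$ makes it one-sided (only non-members when $\tau=R$, only members when $\tau=A$), which is exactly what aligns $p^{a}_{\mathcal{M}}-p^{r}_{\mathcal{M}}>0$ with a strict cutpoint in the first case and $p^{a}_{\mathcal{M}}-p^{r}_{\mathcal{M}}\ge 0$ with a nonstrict one in the second. Everything else --- the implication $f^{a}_{\mathcal{M}}(w)\ge 1-\epsilon>\frac12\Rightarrow p^{a}_{\mathcal{M}}(w)>p^{r}_{\mathcal{M}}(w)$, and the realizability of $\frac12\, p^{a}_{\mathcal{M}}+\frac12(1-p^{r}_{\mathcal{M}})$ by a bona fide RT-QFA --- is routine.
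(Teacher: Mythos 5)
Your proposal is correct and follows essentially the same route as the paper, which proves this theorem by reusing the construction of Theorem \ref{thm:PostQ-subset-Q} (route half of the non-halting probability to acceptance, yielding acceptance probability $\frac12+\frac12\bigl(p^{a}_{\mathcal{M}}(w)-p^{r}_{\mathcal{M}}(w)\bigr)$ against cutpoint $\frac12$) together with exactly your observation that the Latvian default makes the zero-postselection case one-sided, giving a strict cutpoint for $\tau=R$ and a nonstrict one for $\tau=A$. Your two-copy realization of the acceptance function is algebraically identical to the paper's gadget, so the difference is purely cosmetic.
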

\begin{proof}
	The proof is similar to the proof of Theorem \ref{thm:PostQ-subset-Q} with the exception that
	\begin{itemize}
		\item if $ \tau = A $, we have recognition with nonstrict cutpoint;
		\item if $ \tau = R $, we have recognition with strict cutpoint.
	\end{itemize}
\end{proof}

It was shown in \cite{DF10} that the language $ L_{say} $, i.e.
\begin{equation}
	\{ w  \mid \exists u_{1},u_{2},v_{1},v_{2} \in \{a,b\}^{*},
			 w = u_{1}bu_{2} = v_{1}bv_{2}, |u_{1}| = |v_{2}| \},
\end{equation}
cannot be recognized by a RT-LPostQFA. 
Since $ L_{say} \notin \mathsf{uS} $ \cite{FYS10A}, the same result follows easily from Theorem 
\ref{thm:LPostQAL-subset-uQAL}.


\section{Bigger space complexity classes with postselection} \label{section:TMS}
Let us briefly discuss how our results can be generalized to probabilistic or quantum Turing machines with nonconstant space bounds. For that purpose, we start by imposing the standard convention that all transition probabilities or amplitudes used in Turing machines with postselection should be restricted to efficiently computable real numbers. It is evident that the relationship between the classes of languages recognized by real-time machines with postselection and real-time machines  
with restart established in Theorem \ref{thm:posres} can be generalized for all space bounds, since one does not consume any additional space when one resets the head, and switches to the initial state.

Machines with postselection with two-way input tape heads need not halt at the end of the first pass of the input, and therefore have to be defined in a slightly different manner, such that the overall state set is partitioned into four subsets, namely, the sets of postselection accept, postselection reject, nonpostselection halting, and nonpostselection non-halting states. Since two-way machines are already able to implement restarts, adding that capability to them
does not increase the language recognition power.
Therefore, for any space bound, the class of languages recognized with bounded error by a two-way machine with postselection equals the class recognized by the standard version of that machine, which also forms a natural bound
for the power real-time or one-way versions of that model with postselection.

We also note that the language $L_{pal}$ cannot be recognized with bounded error by probabilistic Turing machines for any space bound $o(\log n)$ \cite{FK94}, and Corollary \ref{corollary:L-pal} can therefore be generalized to state that (two-way/one-way/real-time) quantum Turing machines with postselection are superior in recognition power to their probabilistic counterparts for any such bound.


\section{Concluding Remarks} \label{section:ConcludingRemarks}

\begin{figure}[t!]
	\begin{center}
	\fbox{
	\begin{minipage}{0.9\textwidth}
		\begin{center}
		~~\\~~\\
		\ifx\JPicScale\undefined\def\JPicScale{1}\fi
		\unitlength \JPicScale mm
		\begin{picture}(100,74)(0,0)
			\put(34,0){$ \mathsf{REG} = \mathsf{ZPostS} = \mathsf{ZLPostS} $}
			\put(48,3.5){\vector(-1,1){20}}
			\put(10,25){$ \mathsf{PostS} = \mathsf{LPostS} $}
			\put(43,15){$ \mathsf{ZPostQAL} $}
			\multiput(49,3.5)(0,1){10}{$ . $}
			\put(49.3,13){\vector(0,1){1.5}}
			\put(44,40){$ \mathsf{PostQAL} $}
			\put(28,28){\vector(2,1){20}}
			\put(49,18){\vector(0,1){20}}
			\put(73,35){$ \mathsf{ZLPostQAL} =  $}
			\put(70,30){$ \mathsf{NQAL} \cup \mathsf{coNQAL} $}
			\put(52,18){\vector(2,1){22}}
			\put(72,55){$ \mathsf{LPostQAL} $}
			\multiput(51,43)(2,1){11}{$ . $}
			\put(72,53.5){\vector(2,1){1.5}}
			\put(80,38){\vector(0,1){16}}
			\put(13,45){$ \mathsf{BPSPACE(1)} $}
			\multiput(20,28)(0,1){16}{$ . $}
			\put(20.2,43){\vector(0,1){1.5}}
			\put(13,65){$ \mathsf{BQSPACE(1)} $}
			\put(44,65){$ \mathsf{QAL=S} $}
			\put(70,75){$ \mathsf{uQAL=uS} $}
			\put(20,48){\vector(0,1){16}}
			\put(22,48){\vector(3,2){23}}
			\multiput(48,43)(-1,1){21}{$ . $}
			\put(29.4,62){\vector(-1,1){1.5}}
			\put(49.7,43){\vector(0,1){21}}
			\multiput(55,68)(3,1){6}{$ . $}
			\put(70,73.2){\vector(3,1){1.5}}
			\multiput(78,58)(0,1){16}{$ . $}
			\put(78.2,73){\vector(0,1){1.5}}
		\end{picture}
		\end{center}
	\end{minipage}}
	\end{center}
	\caption{The relationships among classical and quantum constant space-bounded classes}
	\vskip\baselineskip
	\label{fig:conclusion}
\end{figure}
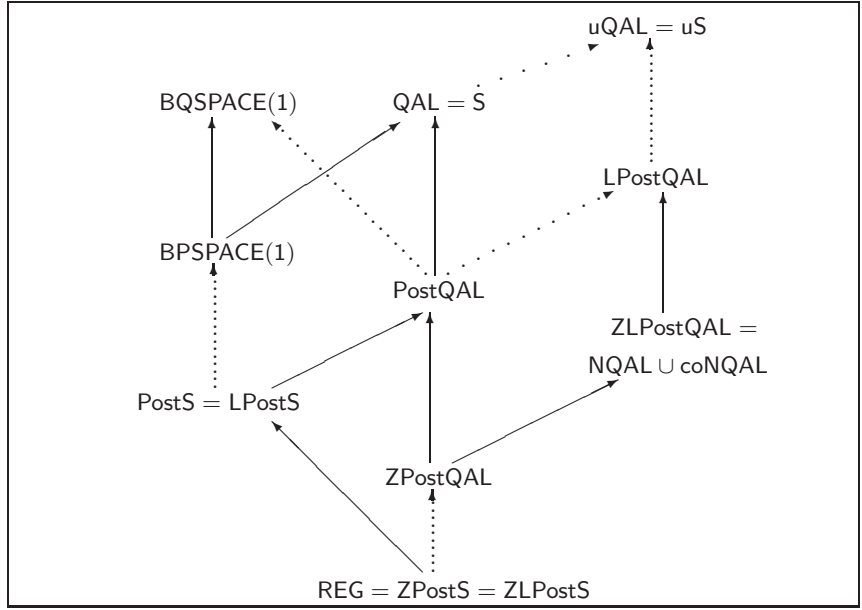 

The relation between postselection and restarting can be extended
easily to other cases.
For example, $\mathsf{PostBQP}$ can also be seen as the class of languages
recognized by
polynomial-size quantum circuits that have been augmented to model the restart action.

Figure \ref{fig:conclusion} summarizes the results presented in this paper. Dotted arrows indicate subset 
relationships, and unbroken arrows represent the cases where it is known that the inclusion is proper. Note that the 
real numbers appearing in the finite automaton definitions are assumed to be restricted as explained in Section 
\ref{section:TMS} for compatibility with the classes based on Turing machines. With that restriction, $\mathsf{BPSPACE(1)}$ and $\mathsf{BQSPACE(1)}$ denote the classes of languages recognized with bounded error by two-way probabilistic and quantum finite automata, respectively.
	
\section*{Acknowledgement} \label{section:Acknowledgement}

We thank R\={u}si\c{n}\v{s} Freivalds for pointing us to the subject of this paper, 
and kindly providing us copies of references \cite{LSF09,SLF10}.

\appendix

\section{The proof of Theorem \ref{thm:RT-QFA-restart-simulated-by-RT-KWQFA-restart}} 
\label{app:proof-of-QFA-to-KWQFA}

We will use almost the same idea presented in the proof of 
Theorem 1 in \cite{YS10B}, in which a similar relationship was shown to hold between the
RT-PFA$ ^{\circlearrowleft} $ and the RT-KWQFA$ ^{\circlearrowleft} $, 
after linearizing the computation of the given RT-QFA$ ^{\circlearrowleft} $.
Let $ \mathcal{G} = (Q,\Sigma,\{ \mathcal{E}_{\sigma \in \tilde{\Sigma}} \},q_{1},Q_{a},Q_{r}) $ 
be an $ n $-state RT-QFA$ ^{\circlearrowleft} $ recognizing $ L $ 	
with error bound $ \epsilon $. We will construct a $ 3n^{2}+6 $-state RT-KWQFA$ ^{\circlearrowleft} $ 
$ \mathcal{M} $ recognizing the same language with error bound $ \epsilon' \le \epsilon $.

The computation of  $ \mathcal{G} $ can be linearized by using techniques described on Page 73 in \cite{Wa03} 
(also see \cite{Ya11A,YS11A}), and so we obtain
a $ n^{2} \times n^{2} $-dimensional matrix for each $ \sigma \in \tilde{\Sigma} $, i.e.
$ A_{\sigma} = \sum\limits_{i=1}^{|\mathcal{E}_{\sigma}|} E_{\sigma,i} \otimes E_{\sigma,i}^{*} $.
We then add two new states, $ q_{n^{2}+1} $ ($ q_{a} $) and $ q_{n^{2}+2} $ ($ q_{r} $), and correspondingly construct new transition matrices so that the overall accepting and rejecting 
probabilities, respectively, are summed up on these new states, i.e. 
\begin{equation}
	A_{\sigma \in \Sigma \cup \{\cent\}}'= \left(
	\begin{array}{c|c}
		A_{\sigma} &  0_{n \times 2}  \\
		\hline
		0_{2 \times n} & I_{2 \times 2} \\
	\end{array}
	\right),
	A_{\dollar}'=\left(
		\begin{array}{c|c}
		0_{n \times n} & 0_{2 \times n}  \\
		\hline
		T_{2 \times n} & I_{2 \times 2} \\
	\end{array}
	\right)
	\left(
	\begin{array}{c|c}
		A_{\dollar} &  0_{n \times 2}  \\
		\hline
		0_{2 \times n} & I_{2 \times 2} \\
	\end{array}
	\right),
\end{equation}
where all the entries of $ T $ are zeros except that 
$ T[1,(i-1)n^{2}+i] = 1  $ when $ q_{i} \in Q_{a} $ and
$ T[2,(i-1)n^{2}+i] = 1  $ when $ q_{i} \in Q_{r} $.
Let $ v_{0} = (1,0,\ldots,0)^{\dagger} $ be a column vector of dimension $ n^{2}+2 $.
It can be verified easily that, for any $ w \in \Sigma^{*} $,
\begin{equation}
	v_{|\tilde{w}|}'=A_{\dollar}' A_{w_{|w|}}' \cdots A_{w_{1}}' A_{\cent}' v_{0}
	= (0_{n^{2} \times 1},p_{\mathcal{G}}^{a}(w),p_{\mathcal{G}}^{r}(w))^{\dagger}.
\end{equation}

\begin{figure}[h!]
	\caption{General template to build an orthonormal set}
	\centering
	\fbox{
	\begin{minipage}{0.9\textwidth}
		\footnotesize
		Let $ S $ be a finite set and $ \{ A_{s} \mid s \in S \} $ 
		be a set of $ m \times m $-dimensional matrices.
		We present a method in order to find two sets of a $ m \times m $-dimensional matrices, 
		$ \{ B_{s} \mid s \in S\} $ and $ \{ C_{s} \mid s \in S\} $,  with a generic constant $ l $ such that
		the columns of the matrix 
		\begin{equation}
			\frac{1}{l} 
			\left( \begin{array}{c} A_{s} \\ \hline B_{s} \\ \hline C_{s} \end{array} \right)
		\end{equation}
form an orthonormal set		for each $ s \in S $. The details are given below.
		\begin{enumerate}
			\item The entries of $ B_{s \in S} $ and $ C_{s \in S} $ are set to 0.
			\item For each $ s \in S $, the entries of $ B_{s} $ are updated to make the columns of
				$ \left( \begin{array}{c} A_{s} \\ \hline B_{s} \end{array} \right) $  pairwise orthogonal.
				Specifically, \\
				\begin{tabular}{ll}
					& for $ i=1, \ldots, m-1 $ \\
					& ~~~~set $ b_{i,i}=1 $ \\
					& ~~~~for $ j=i+1, \ldots, m $ \\
					& ~~~~~~~~set $ b_{i,j} $ to some value so that the $ i^{th} $ and $ j^{th} $ columns 
					become \\ 
					& ~~~~~~~~orthogonal \\
					& set $ l_{s} $ to the maximum of the lengths (norms) of the columns of
					$ \left( \begin{array}{c} A_{s} \\ \hline B_{s} \end{array} \right) $
				\end{tabular}
			\item  Set $ l = \max( \{ l_{s} \mid s \in S \} ) $.
			\item For each $ s \in S $, the diagonal entries of $ C_{s} $ are 
				updated to make the length of each column of
				$ \left( \begin{array}{c} A_{s} \\ \hline B_{s} \\ \hline C_{s} 
				\end{array} \right) $ equal to $ l $.
		\end{enumerate}
	\end{minipage}
	}
	\label{berr:fig:general-template-2}
\end{figure}
	
Based on the template given in Figure \ref{berr:fig:general-template-2}, we calculate a constant $ l $ and
the sets $ B_{\sigma \in \tilde{\Sigma}} $ and $ C_{\sigma \in \tilde{\Sigma}} $,
such that the columns of the matrix
\begin{equation}
	\frac{1}{l} \left( \begin{array}{c} A'_{\sigma} \\ \hline B_{\sigma} 
	\\ \hline C_{\sigma} \end{array} \right)
\end{equation}
form an orthonormal set.
Thus, for each $ \sigma \in \tilde{\Sigma} $, we define transition matrices of $ \mathcal{M} $ as 
\begin{equation}
	U_{\sigma} = \frac{1}{l} \left( \begin{array}{c|c}
	\begin{array}{c} A'_{\sigma} \\ \hline B_{\sigma} 
	\\ \hline C_{\sigma} \end{array} & D_{\sigma}
	\end{array}
	 \right),
\end{equation}
where the entries of $ D_{\sigma} $ are selected in order to make $ U_{\sigma} $ unitary.
The state set of $ \mathcal{M} $ can be specified as follows:
\begin{enumerate}
	\item The states corresponding to $ q_{a} $ and $ q_{r} $ are
	the accepting and rejecting states, respectively,
	\item All the states corresponding to rows (or columns) of the $ A_{\sigma \in \tilde{\Sigma}} $'s 
		are the nonhalting states, where the first one is the initial state, and,
	\item All remaining states are restarting states.
\end{enumerate}

When $ \mathcal{M} $ runs on input string $w$, the amplitudes of $ q_{a}' $ and $ q_{r}' $, 
the only halting states of $ \mathcal{M} $, at the end of the first round are
$ \left( \frac{1}{l} \right)^{|\tilde{w}|}p_{\mathcal{G}}^{a}(w)  $ and
$ \left( \frac{1}{l} \right)^{|\tilde{w}|}p_{\mathcal{G}}^{r}(w)  $, respectively.
We therefore have by Lemma \ref{lem:bounded-error} that when $ w \in L $,
\begin{equation}
	\frac{p_{\mathcal{M}}^{r}(w)}{p_{\mathcal{M}}^{a}(w)} =
	\frac{(p_{\mathcal{G}}^{r}(w))^{2}}{(p_{\mathcal{G}}^{a}(w))^{2}}
	\leq
	\frac{\epsilon^{2}}{(1-\epsilon)^{2}},
\end{equation}
and similarly, when $ w \notin L $,
\begin{equation}
	\frac{p_{\mathcal{M}}^{a}(w)}{p_{\mathcal{M}}^{r}(w)} =
	\frac{(p_{\mathcal{G}}^{a}(w))^{2}}{(p_{\mathcal{G}}^{r}(w))^{2}}
	\leq
	\frac{\epsilon^{2}}{(1-\epsilon)^{2}}.
\end{equation}
By solving the equation
\begin{equation} \frac{\epsilon'}{1-\epsilon'} = \frac{\epsilon^{2}}{(1-\epsilon)^{2}}, \end{equation}
we obtain
\begin{equation} \epsilon'=\frac{\epsilon^{2}}{1 - 2\epsilon + 2\epsilon^{2}} \leq \epsilon. \end{equation}

By using Lemma \ref{lemma:expected-runtime}, the expected runtime of $ \mathcal{G} $ is 
\begin{equation} \frac{1}{p_{\mathcal{G}}^{a}(w)+p_{\mathcal{G}}^{r}(w)} |w| \in O(s(|w|)), \end{equation}
and so the expected runtime of $ \mathcal{M} $ is
\begin{equation}
	\left( l \right)^{2|\tilde{w}|} \frac{1}{(p_{\mathcal{G}}^{a}(w))^{2}+(p_{\mathcal{G}}^{r}(w))^{2}}|w|
	<
	3 \left( l \right)^{2|\tilde{w}|} \left( \frac{1}{p_{\mathcal{G}}^{a}(w)
	+p_{\mathcal{G}}^{r}(w)} \right)^{2} |w|,
\end{equation}
which is $ O(l^{2|w|}s^{2}(|w|))  $.

\section{Error reduction for postselection machines} \label{app:probamp}

\begin{lemma}
	If $ L $ is recognized by RT-PostQFA (resp., RT-PostPFA) $ \mathcal{M} $ with error bound 
	$ \epsilon \in (0,\frac{1}{2})  $,
	then there exists a RT-PostQFA (resp., RT-PostPFA), say $ \mathcal{M}' $,
	recognizing $ L $ with error bound $  \epsilon^{2} $.
\end{lemma}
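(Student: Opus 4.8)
The plan is to realise $\mathcal{M}'$ as the $k$-fold tensor product of $\mathcal{M}$ with itself, where $k$ is chosen as a function of $\epsilon$, and to postselect only on the branches in which all $k$ copies \emph{unanimously} agree. Concretely, I would take state set $Q^{\times k}$, transition operators $\mathcal{E}_{\sigma}^{\otimes k}$ (with Kraus elements $\{E_{\sigma,i_1}\otimes\cdots\otimes E_{\sigma,i_k}\}$ in the quantum case, and $A_{\sigma}^{\otimes k}$ in the probabilistic case), initial state $\ket{q_1}^{\otimes k}$, and designate $Q'_{pa}=Q_{pa}^{\times k}$ and $Q'_{pr}=Q_{pr}^{\times k}$ as the postselection accept and reject sets, treating every other state (in particular every ``mixed'' branch) as a nonpostselection state. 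Note that postselection is exactly what lets us use this unanimity rule and simply discard the disagreeing branches, which is the simplification over a majority vote alluded to in the main text.

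First I would check that $\mathcal{M}'$ is a legal machine of the required type. In the quantum case $\sum_{i_1,\dots,i_k}(E_{i_1}\otimes\cdots\otimes E_{i_k})^{\dagger}(E_{i_1}\otimes\cdots\otimes E_{i_k})=\left(\sum_{i} E_{i}^{\dagger}E_{i}\right)^{\otimes k}=I$, so $\mathcal{E}_{\sigma}^{\otimes k}$ is admissible; the probabilistic analogue is immediate. The quantum content of the appendix is then the following factorisation. Since the initial state is a product state and every operator applied is a product operator, one has $\mathcal{E}_{\sigma}^{\otimes k}(\rho^{\otimes k})=(\mathcal{E}_{\sigma}(\rho))^{\otimes k}$ at every step, so the state after reading $\tilde w$ is exactly $\rho_{|\tilde w|}^{\otimes k}$, with $\rho_{|\tilde w|}$ the corresponding single-copy state. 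Because $P'_{pa}=P_{pa}^{\otimes k}$ and the trace is multiplicative on tensor products,
\[ p^a_{\mathcal{M}'}(w)=tr\left(P_{pa}^{\otimes k}\,\rho_{|\tilde w|}^{\otimes k}\right)=\left(tr(P_{pa}\rho_{|\tilde w|})\right)^{k}=\left(p^a_{\mathcal{M}}(w)\right)^{k}, \]
and likewise $p^r_{\mathcal{M}'}(w)=(p^r_{\mathcal{M}}(w))^{k}$; for the probabilistic version the same identities follow from $v_{|\tilde w|}^{\otimes k}$ and the factorisation of the relevant coordinate sums. In particular $p^a_{\mathcal{M}'}(w)+p^r_{\mathcal{M}'}(w)=(p^a_{\mathcal{M}}(w))^{k}+(p^r_{\mathcal{M}}(w))^{k}>0$, since the validity of $\mathcal{M}$ forces at least one of $p^a_{\mathcal{M}}(w),p^r_{\mathcal{M}}(w)$ to be positive, so $\mathcal{M}'$ satisfies the postselection condition.

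It remains to fix $k$ and verify the error bound. Writing $p^a,p^r$ for the single-copy quantities, Equation \ref{eq:postacc} gives $f^a_{\mathcal{M}'}(w)=\frac{(p^a)^{k}}{(p^a)^{k}+(p^r)^{k}}=\frac{1}{1+(p^r/p^a)^{k}}$ (for $w\in L$ we have $f^a_{\mathcal{M}}(w)\ge 1-\epsilon>0$, hence $p^a>0$). From $f^a_{\mathcal{M}}(w)\ge 1-\epsilon$ one gets $p^r/p^a\le \epsilon/(1-\epsilon)$, so
\[ f^a_{\mathcal{M}'}(w)\ \ge\ \frac{1}{1+\left(\frac{\epsilon}{1-\epsilon}\right)^{k}}, \]
and a direct rearrangement shows the right-hand side is at least $1-\epsilon^2$ exactly when $\left(\frac{\epsilon}{1-\epsilon}\right)^{k}\le \frac{\epsilon^2}{1-\epsilon^2}$. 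This inequality is where the only real obstacle lies, and it is precisely the point a naive argument mishandles: \emph{two} copies do not suffice, since they yield worst-case ratio $(\epsilon/(1-\epsilon))^{2}$ and hence error $\frac{\epsilon^2}{1-2\epsilon+2\epsilon^2}$, which is strictly larger than $\epsilon^2$ for every $\epsilon\in(0,\tfrac12)$. However, because $\epsilon/(1-\epsilon)<1$ the left-hand side tends to $0$ as $k\to\infty$, so it suffices to fix any integer $k\ge \frac{\ln\left((1-\epsilon^2)/\epsilon^2\right)}{\ln\left((1-\epsilon)/\epsilon\right)}$. The case $w\notin L$ is symmetric, since there $p^a/p^r\le \epsilon/(1-\epsilon)$, and the same $k$ yields $f^r_{\mathcal{M}'}(w)\ge 1-\epsilon^2$. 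Hence $\mathcal{M}'$ recognises $L$ with error bound $\epsilon^2$, and the whole argument reduces to the tensor-power estimate above once the quantum factorisation has been recorded.
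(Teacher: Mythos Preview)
Your proof is correct and follows essentially the same approach as the paper: build $\mathcal{M}'$ as the $k$-fold tensor product with $Q'_{pa}=Q_{pa}^{\times k}$, $Q'_{pr}=Q_{pr}^{\times k}$, derive $p^{a}_{\mathcal{M}'}(w)=(p^{a}_{\mathcal{M}}(w))^{k}$ and $p^{r}_{\mathcal{M}'}(w)=(p^{r}_{\mathcal{M}}(w))^{k}$, and then choose $k$ so that $\left(\frac{\epsilon}{1-\epsilon}\right)^{k}\le\frac{\epsilon^{2}}{1-\epsilon^{2}}$. Your threshold $k\ge \ln\!\left(\frac{1-\epsilon^{2}}{\epsilon^{2}}\right)\big/\ln\!\left(\frac{1-\epsilon}{\epsilon}\right)$ is equivalent to the paper's $k=1+\left\lceil \log\!\left(\tfrac{1}{\epsilon}+1\right)\big/\log\!\left(\tfrac{1}{\epsilon}-1\right)\right\rceil$, and you additionally make explicit the admissibility of $\mathcal{E}_{\sigma}^{\otimes k}$, the positivity of the postselection probability, and the fact that $k=2$ is insufficient, all of which the paper leaves implicit.
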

\begin{proof}
	We give a proof for RT-PostQFAs, which can be adapted easily to RT-PostPFAs.
	$ M' $ can be obtained by taking the tensor product of $ k $ copies of $ \mathcal{M} $, where
	the new postselection accept (resp., reject)  states, $ Q_{pa}' $  (resp., $ Q_{pr}' $),
	are $ \otimes_{i=1}^{k} Q_{pa} $ (resp., $ \otimes_{i=1}^{k} Q_{pr} $),
	where $ Q_{pa} $ (resp., $ Q_{pr} $) are the postselection accept (resp., reject) states of $ \mathcal{M} $.
	
	Let $ \rho_{\tilde{w}} $ and $ \rho_{\tilde{w}}' $ be the  density matrices of 
	$ \mathcal{M} $ and $ \mathcal{M}' $, respectively,
	after reading $ \tilde{w} $ for a given input string $ w \in \Sigma^{*} $. 
	By definition, we have
	\begin{equation}
		p_{\mathcal{M}}^{a}(w) = \sum_{q_{i} \in Q_{pa} }\rho_{\tilde{w}}[i,i],
		~~~~
		p_{\mathcal{M}'}^{a}(w) = \sum_{q_{i'} \in Q_{pa}' }\rho_{\tilde{w}}[i',i']		
	\end{equation}
	and 
	 \begin{equation}
		p_{\mathcal{M}}^{r}(w) = \sum_{q_{i} \in Q_{pr} }\rho_{\tilde{w}}[i,i],
		~~~~
		p_{\mathcal{M}'}^{r}(w) = \sum_{q_{i'} \in Q_{pr}' }\rho_{\tilde{w}}[i',i'].
	\end{equation}
	By using the equality $ \rho_{\tilde{w}}' = \otimes_{i=1}^{k} \rho_{\tilde{w}}  $,
	the following can be obtained with a straightforward calculation:
	\begin{equation}
		p_{\mathcal{M}'}^{a}(w) = \left( p_{\mathcal{M}}^{a}(w) \right)^{k}
	\end{equation}
	and
	\begin{equation}
		p_{\mathcal{M}'}^{r}(w) = \left( p_{\mathcal{M}}^{r}(w) \right)^{k}.
	\end{equation}
	
	We examine the case of $ w \in L $ (the case $ w \notin L $ is symmetric).
	Since $ L $ is recognized by $ \mathcal{M} $ with error bound $ \epsilon $, 
	we have (due to Lemma \ref{lem:bounded-error})
	\begin{equation}
		 \frac{p_{\mathcal{M}}^{r}(w)}{p_{\mathcal{M}}^{a}(w)}
		 \leq 
		 \frac{\epsilon}{1-\epsilon}.
	\end{equation}
	If $ L $ is recognized by $ \mathcal{M}' $ with error bound $ \epsilon^{2} $,
	we must have
	\begin{equation}
		\frac{p_{\mathcal{M}'}^{r}(w)}{p_{\mathcal{M}'}^{a}(w)}
		\leq
		\frac{\epsilon^{2}}{1-\epsilon^{2}}.
	\end{equation}
	Thus, any $ k $ satisfying
	\begin{equation}		
		\left( \frac{\epsilon}{1-\epsilon} \right)^{k}
		\leq		
		\frac{\epsilon^{2}}{1-\epsilon^{2}}
	\end{equation}
provides the desired machine $ \mathcal{M}' $	due to the fact that 
	\begin{equation}
		\label{berr:eq:k}
		\frac{p_{\mathcal{M}'}^{r}(w)}{p_{\mathcal{M}'}^{a}(w)} =
		\left( \frac{p_{\mathcal{M}}^{r}(w)}{p_{\mathcal{M}}^{a}(w)} \right)^{k}.
	\end{equation}
	By solving Equation \ref{berr:eq:k}, we  get
	\begin{equation}
		k = 1 + \left\lceil \frac{ \log \left( \frac{1}{\epsilon} + 1 \right) }{ 
		\log \left( \frac{1}{\epsilon} - 1 \right) } \right\rceil.
	\end{equation}
	Therefore, for any $ 0 < \epsilon < \frac{1}{2} $, we can find a value for $ k $.
\end{proof}

\begin{corollary}
	If $ L $ is recognized by RT-PostQFA (resp., RT-PostPFA) $ \mathcal{M} $ with error bound 
	$ 0 < \epsilon < \frac{1}{2}  $,
	then there exists a RT-PostQFA (resp., RT-PostPFA), say $ \mathcal{M}' $, recognizing $ L $ with error bound 
	$ \epsilon' < \epsilon $ such that $ \epsilon' $ can be arbitrarily close to 0.
\end{corollary}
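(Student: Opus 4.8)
The plan is to iterate the error-squaring lemma that immediately precedes this corollary. That lemma shows that from any RT-PostQFA (resp.\ RT-PostPFA) recognizing $L$ with error bound $\epsilon \in (0,\frac{1}{2})$, one can build another one recognizing $L$ with error bound $\epsilon^{2}$; and since $\epsilon < \frac{1}{2}$ implies $\epsilon^{2} < \epsilon < \frac{1}{2}$, the new error bound is again in the admissible range, so the lemma applies to it as well. Thus, starting from $\mathcal{M}_{0} = \mathcal{M}$ with error bound $\epsilon_{0} = \epsilon$, repeated application yields machines $\mathcal{M}_{1}, \mathcal{M}_{2}, \ldots$ with error bounds $\epsilon_{j} = \epsilon^{2^{j}}$.

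The key steps, in order, are: first, observe that each $\epsilon_{j}$ stays strictly below $\frac{1}{2}$, so the lemma is applicable at every stage. Second, note that since $0 < \epsilon < \frac{1}{2}$, we have $\epsilon^{2^{j}} \to 0$ as $j \to \infty$. Third, given any target $\epsilon' > 0$, choose $j$ large enough that $\epsilon^{2^{j}} \le \epsilon'$ (concretely, any $j \ge \log_{2}\left( \frac{\log \epsilon'}{\log \epsilon} \right)$ when $\epsilon' < 1$, and $j = 1$ otherwise); then $\mathcal{M}_{j}$ is the desired machine $\mathcal{M}'$. Finally, remark that this $\mathcal{M}_{j}$ is obtained as the tensor product of $\prod_{i=0}^{j-1} k_{i}$ copies of $\mathcal{M}$, where each $k_{i}$ is the value supplied by the preceding lemma for error bound $\epsilon_{i}$, so the construction is explicit and the resulting machine is a legitimate RT-PostQFA (resp.\ RT-PostPFA). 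Since $\epsilon' < \epsilon$ was allowed to be arbitrary, $\epsilon'$ can be taken arbitrarily close to $0$.

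There is essentially no obstacle here: the statement is a direct corollary, and all the real work — the tensor-product construction, the computation of the product-form acceptance/rejection probabilities $p_{\mathcal{M}'}^{a}(w) = (p_{\mathcal{M}}^{a}(w))^{k}$, and the solvability of the exponent inequality for any $\epsilon \in (0,\frac{1}{2})$ — is already carried out in the lemma. The only point requiring a word of care is that one must apply the lemma to each successive machine rather than to $\mathcal{M}$ directly; this is legitimate precisely because the error bound is driven down but never out of the open interval $(0,\frac{1}{2})$. One could also observe, as a minor alternative, that a single application of the tensor-product idea with a suitably chosen $k$ (solving $\left(\frac{\epsilon}{1-\epsilon}\right)^{k} \le \frac{\epsilon'}{1-\epsilon'}$ for $k$) already suffices, bypassing the iteration entirely; but phrasing it as repeated squaring makes the dependence on the already-proved lemma most transparent.
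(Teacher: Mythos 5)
Your proof is correct and matches the paper's intent: the corollary is stated without a separate proof precisely because it follows from the preceding error-reduction lemma, either by iterating the squaring step as you do (with $\epsilon^{2^{j}} \to 0$ and each intermediate bound remaining in $(0,\tfrac{1}{2})$) or, as you note, by a single tensor-product application with $k$ chosen to satisfy $\left(\frac{\epsilon}{1-\epsilon}\right)^{k} \le \frac{\epsilon'}{1-\epsilon'}$. No gaps.
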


\bibliographystyle{plain}
\bibliography{YakaryilmazSay}

\end{document}